\newtheorem{mydef}{Definition}
\newtheorem{lem}{Lemma}
\newtheorem{prop}{Proposition}
\newtheorem{theorem}{Theorem}
\newcommand{\reals}{{\mathbb R}}
\newcommand{\la}{{\lambda}}
\newcommand{\Braket}[2]{\left<#1|#2\right>}
\newcommand{\Ketbra}[2]{\ket{#1}\!\bra{#2}}
\newcommand{\bra}[1]{{\left\langle{#1}\right\vert}}
\newcommand{\ket}[1]{{\left\vert{#1}\right\rangle}}
\newcommand{\qw}[1][-1]{\ar @{-} [0,#1]}
\newcommand{\qwx}[1][-1]{\ar @{-} [#1,0]}
\newcommand{\gate}[1]{*+<.6em>{#1} \POS ="i","i"+UR;"i"+UL **\dir{-};"i"+DL **\dir{-};"i"+DR **\dir{-};"i"+UR **\dir{-},"i" \qw}
\newcommand{\control}{*!<0em,.025em>-=-<.2em>{\bullet}}
\newcommand{\controlo}{*+<.01em>{\xy -<.095em>*\xycircle<.19em>{} \endxy}}
\newcommand{\ctrl}[1]{\control \qwx[#1] \qw}
\newcommand{\ctrlo}[1]{\controlo \qwx[#1] \qw}
\newcommand{\targ}{*+<.02em,.02em>{\xy ="i","i"-<.39em,0em>;"i"+<.39em,0em> **\dir{-}, "i"-<0em,.39em>;"i"+<0em,.39em> **\dir{-},"i"*\xycircle<.4em>{} \endxy} \qw}
\newcommand{\multigate}[2]{*+<1em,.9em>{\hphantom{#2}} \POS [0,0]="i",[0,0].[#1,0]="e",!C *{#2},"e"+UR;"e"+UL **\dir{-};"e"+DL **\dir{-};"e"+DR **\dir{-};"e"+UR **\dir{-},"i" \qw}
\newcommand{\ghost}[1]{*+<1em,.9em>{\hphantom{#1}} \qw}
\newcommand{\gategroup}[6]{\POS"#1,#2"."#3,#2"."#1,#4"."#3,#4"!C*+<#5>\frm{#6}}
\newcommand{\lstick}[1]{*!R!<.5em,0em>=<0em>{#1}}
\newcommand{\Qcircuit}{\xymatrix @*=<0em>}
\begin{document}

\title[Quantum Poisson solver]{Quantum algorithm and circuit design solving the Poisson equation}

\author{Yudong Cao$^1$, Anargyros Papageorgiou$^3$, Iasonas Petras$^3$, Joseph Traub$^3$ and Sabre Kais$^{2}$\footnote{Corresponding author. Email: kais@purdue.edu}}

\address{$^1$ Department of Mechanical Engineering, Purdue University, West Lafayette, IN 47907}
\address{$^2$ Department of Chemistry, Physics, and Computer Science; Birck Nanotechnology Center, Purdue University, West Lafayette, IN 47907}
%\ead{kais@purdue.edu}
\address{$^3$ Department of Computer Science, Columbia University, New York, 10027}

\begin{abstract}
The Poisson equation occurs in many areas of science and engineering. Here we
focus on its numerical solution for an equation in $d$ dimensions. 
In particular we present a quantum algorithm and a scalable quantum circuit design which approximates 
the solution of the Poisson equation on a grid with error $\varepsilon$. 
We assume we are given a supersposition of function evaluations of the right hand side of the Poisson equation.
The algorithm produces a quantum state encoding the solution. 
The number of quantum operations and the number of qubits used by the circuit is almost linear in $d$ and polylog in $\varepsilon^{-1}$.
%The circuit uses a number of qubits which is also almost linear in $d$ and polylog in $\varepsilon^{-1}$. 
We present quantum circuit modules together with performance 
guarantees which can be also used for other problems. 
\end{abstract}
\pacs{03.67.Ac}%\submitto{\NJP}

\section{Introduction}
Quantum computers take advantage of quantum mechanics to solve certain computational problems faster than
classical computers. Indeed in some cases the quantum algorithm is exponentially faster than the best classical algorithm known
\cite{abrams97,abrams99,alan05,dowling06,lidar99,lloyd96,traub10,shor94,zhang10,nori11a,kais08,nori11b}.
%It is important to distinguish between two notions of \lq\lq exponentially faster\rq\rq. We introduce a new dichotomy:

%\begin{enumerate}
%\item Exponential quantum speedup: A quantum computer can solve the problem exponentially faster than any {\em known} classical algorithm.
%\item Strong exponential quantum speedup: A quantum computer can solve the problem exponentially faster than any classical algorithm.
%\end{enumerate}

%The crucial difference between the two concepts is that if a problem satisfies criterion 1 someone may invent a faster classical
%algorithm, while if a problem satisfies criterion 2 no faster classical algorithm can exist.

%To prove {\em strong} exponential quantum speedup good bounds on the classical computational complexity must be known.
%Shor's very influential result \cite{shor94} yields exponential speedup but not strong exponential speedup because the
%classical computational complexity of factorization is not known. The same is true of most of the quantum speedups for scientific
%problems. Of course, exponential can be replaced by any other function. Thus Grover's algorithm for 
%searching an unstructured database \cite{grover97} enjoys strong polynomial quantum speedup. 

In this paper we present a quantum algorithm and circuit solving the Poisson equation.
The Poisson equation plays a fundamental role in numerous areas of science and engineering, such as computational fluid 
dynamics \cite{b00,fletcher91}, quantum mechanical continuum solvation \cite{tmc05}, electrostatics \cite{griffiths99}, the theory of Markov chains \cite{meyn07,meyn09,asmussen07} 
and is important for density functional theory and electronic structure calculations \cite{engel11}. 
% In particular, we consider solving the discretized Poisson equation $-\Delta_h\vec{v}=\vec{f}$ with Dirichlet boundary condition where $\Delta_h$ is the Poisson matrix, $\vec{v}$ is the solution of the equation on a discretized domain and $\vec{f}$ is the value of the inhomogeneous term at the corresponding grid points in the domain (For details see Section~\ref{sec:DP}).

Any classical numerical algorithm solving the Poisson equation with error $\varepsilon$ 
has cost bounded from below by a function that
grows as $\varepsilon^{-\alpha d}$, where $d$ denotes the 
dimension or 
the number of variables, and $\alpha > 0$ is a smoothness constant \cite{wer91,rw96}. 
Therefore the cost grows exponentially in
$d$ and the problem suffers from the curse of dimensionality. 

We show that the Poisson equation can be solved with error $\varepsilon$ 
using a quantum algorithm with a number of quantum operations which is almost linear in $d$ 
and polylog in $\varepsilon^{-1}$. A number of repetitions 
proportional to $\varepsilon^{-4\alpha}$ guarantees that this algorithm 
succeeds with probability arbitrarily close to $1$. Hence
the quantum algorithm breaks the curse of dimensionality
and, with respect to the dimension of the problem $d$,
enjoys exponential speedup relative to classical algorithms.

On the other hand, we point out that the output of the algorithm is a quantum state that encodes the solution on a regular grid rather than a bit string that represents the solution. 
It can be useful if one is interested in computing a function of the solution rather
than the solution itself. In general, the quantum circuit implementing 
the algorithm can be used as a module
in other quantum algorithms that need the solution of the Poisson equation
to achieve their main task.

In terms of the input of the algorithm, we assume that a quantum state encoding a superposition of
function evaluations of the
right hand side of the Poisson equation is available to us, and we do not account for the cost for
preparing this superposition. In general, preparing
arbitrary quantum states is a very hard problem. Nevertheless, in certain cases one 
can prepare efficiently superpositions of function evaluations using techniques in \cite{grover02,SS06}.
We do not deal with the implementation of such superpositions in this paper.

%We will show that for the Poisson equation we achieve strong exponential quantum speedup for computing $\vec{v}^TM\vec{v}$ for some operator $M$ which can be mapped quantum mechanically. It is known that the worst case classical complexity
%of the Poisson equation is exponential in the dimension $d$ \cite{wer91}. Using the quantum algorithm, we obtain the solution to the discretized equation which is encoded in a quantum state $|v\rangle=\sum_iv_i|i\rangle$ (here $v_i$ is the $i$-th element of $\vec{v}$) with error $\varepsilon$ with
%a number of quantum operations which is almost linear in $d$ and polylog in $\varepsilon^{-1}$. 
%More precisely, the number of quantum operations is
%proportional to $\max\{d, \log_2\varepsilon^{-1}\}(\log_2 d+ \log_2\varepsilon^{-1})^3$. The quantum circuit uses a number of qubits proportional to 
%$\max\{d, \log_2\varepsilon^{-1}\} (\log_2 d + \log_2\varepsilon^{-1})^2$. 

There are many ways to solve the Poisson equation. We choose to discretize it on a regular grid in cartesian coordinates and then solve the resulting system of
linear equations. For this we use the quantum algorithm of \cite{lloyd09} for solving 
systems of linear equation. The solution of differential and partial differential equations
is a natural candidate for applying that algorithm, as already stated in \cite{lloyd09}. 
It has been applied to the solution of differential equations in~\cite{berry10,LO08}. In the case of the Poisson equation, however, that we consider in this paper there is no 
need to assume that the matrix
is given by an oracle. Indeed, a significant part of our work 
deals with the Hamiltonian simulation of the matrix of the
Poisson equation. Moreover, it is an open problem to determine when it is possible to simulate a Hamiltonian with cost polynomial in 
the logarithm of the matrix size and the logarithm of $\varepsilon^{-1}$ \cite{CW12}. Our results show that in the case of the Hamiltonian for the Poisson equation the answer is positive.

%Thus we can use the quantum algorithm of Harrow et al. \cite{lloyd09} for solving the system without assuming, however,
%that the matrix is given by an oracle. A similar idea of using the algorithm~\cite{lloyd09} for solving differential equations is presented in~\cite{berry10}. However, in this paper, a significant part of our work deals with the Hamiltonian simulation of the matrix of the
%Poisson equation. Moreover, it is an open problem to determine whether or not it is possible to simulate a Hamiltonian with cost polynomial in 
%the logarithm of the matrix size and the logarithm of $\varepsilon^{-1}$ \cite{CW12}. Our results show %that in the case of the Hamiltonian in the Poisson equation the answer is positive.  

Our analysis of the implementation includes all the numerical details and will be helpful to researchers working on other
problems. 
All calculations are carried out in fixed precision arithmetic and we provide
accuracy and cost guarantees.
We account for the qubits, including
ancilla qubits, needed for the different operations. We provide quantum circuit modules for the approximation of trigonometric functions, which are needed
in the Hamiltonian simulation of the matrix of the Poisson equation. 
We show how to obtain a quantum circuit computing the reciprocal of the eigenvalues using Newton iteration and modular addition and 
multiplication. 
We show how to implement quantum mechanically the inverse trigonometric function 
needed for controlled rotations. 
As we indicated, our results are not limited to the solution of the Poisson equation but can be used in other quantum algorithms.
Our simulation module  
can be combined with splitting methods to simulate the Hamiltonian $-\Delta + V$,
where $\Delta$ is the Laplacian and  $V$ is a potential function. 
The trigonometric approximations can be used by algorithms dealing with quantum walks. 
The reciprocal of a real number and a controlled rotation by an angle obtained by an inverse trigonometric approximation
are needed for implementing the linear systems algorithm \cite{lloyd09} regardless of the matrix involved.

%In developing the quantum circuit we need certain modules along with performance guarantees which we believe will be helpful 
%to researchers working on other problems. The modules include reciprocals of eigenvalues and trigonometric approximations.

\section{Overview}

We consider the $d$-dimensional Poisson equation with Dirichlet boundary conditions.

\begin{mydef}
\begin{eqnarray}
-\Delta u(x) = f(x) & \quad & x\in I_d:=(0,1)^d,  \label{eq:poisson}\\
u(x) =0 & \quad & x\in \partial I_d \nonumber, 
\end{eqnarray}
where $f:I_d\rightarrow\reals$ is a sufficiently smooth function; e.g., see \cite{evans98,forth04,wer91} for details.
\end{mydef}

For simplicity we study this equation over the unit cube but a similar analysis applies to more general domains in
$\reals^d$. Often one solves this equation by discretizing it and solving the resulting 
linear system. 
A finite difference discretization of the Poisson equation on a grid with mesh size $h$, using a $(2d+1)$ stencil for the Laplacian,
yields the linear system

\begin{equation}
\label{eq:SYSTEM}
{\text{$-\Delta_h \vec v = \vec f_h,$}}
\end{equation}
where $f_h$ is the vector obtained by sampling the function $f$ on the interior grid points \cite{demmel97,forth04,lev07}.
The resulting matrix is symmetric positive definite. 

To solve the Poisson equation with error $O(\varepsilon)$ both the discretization error and the error 
on the solution of the system should be $O(\varepsilon)$.
This implies that $\Delta_h$ is a 
matrix of size proportional to $\varepsilon^{-{\alpha}d} \times \varepsilon^{-{\alpha}d}$, 
where ${\alpha}>0$ is a constant that depends on the smoothness of the solution which, in turn, depends on the smoothness of $f$ \cite{BH62,forth04,wer91}.
For example, when $f$ has uniformly bounded partial derivatives up to order four then $\alpha=1/2$. 

There are different ways for solving this system using classical algorithms. Demmel \cite[Table 6.1]{demmel97} lists a number of possibilities.
The conjugate gradient algorithm \cite{saad03} is an example.
Its cost for solving this system with error $\varepsilon$ is proportional to 
\[\varepsilon^{-\alpha d} \sqrt{\kappa} \log\varepsilon^{-1},\]
where $\kappa$ denotes the condition number of $\Delta_h$. We know $\kappa=\varepsilon^{-2\alpha}$, 
independently of $d$.
The resulting cost is proportional to  
$\varepsilon^{-{\alpha}d-\alpha} \log\varepsilon^{-1}$. For details about the solution of large linear systems see \cite{TW84}.
Observe that the factor $\varepsilon^{-{\alpha}d}$ in the cost is the matrix size and its contribution cannot be overcome.
Any direct or iterative 
classical algorithm solving this system has cost at least 
$\varepsilon^{-{\alpha}d}$, since the algorithm must determine all unknowns. So any algorithm solving the system has cost exponential in $d$. 
In fact a much stronger result holds, namely, the cost of any classical algorithm solving the Poisson equation in the worst case
must be exponential in $d$ \cite{wer91}.

We present a scalable quantum circuit for the solution of (\ref{eq:SYSTEM})
and thereby for the solution of Poisson equation with error $O(\varepsilon)$ that 
uses a number of qubits proportional to $\max\{d, \log_2\varepsilon^{-1}\}  (\log_2 d + \log_2 \varepsilon^{-1})^2$ and 
a number of quantum operations proportional to $\max\{d, \log_2\varepsilon^{-1}\} (\log_2 d + \log_2 \varepsilon^{-1})^3$.
It can be shown that
$\log_2 d = O(\log_2 \varepsilon^{-1})$ and the above expressions are simplified to
$\max\{d, \log_2\varepsilon^{-1}\} (\log_2 \varepsilon^{-1})^2$ qubits and $\max\{d, \log_2\varepsilon^{-1}\} ( \log_2 \varepsilon^{-1} )^3$ quantum operations. A measurement
outcome at the final state determines whether the algorithm has succeeded or not. A number
of repetitions proportional to the square of the condition number 
yields a success probability arbitrarily close to one.

In Section \ref{sec:DP} we deal with the discretization of the Poisson equation showing the resulting matrix. 
We also describe how the matrix in the 
multidimensional case can be expressed in terms of the one dimensional matrix using Kronecker products. This, as we'll see, is important in the simulation
of the Poisson matrix. In Section \ref{sec:QC} we show the quantum circuit solving the Poisson equation. 
We perform the error analysis and show the quantum circuit modules 
computing the reciprocal of the eigenvalues and from those the controlled rotation needed at the end of the linear 
systems algorithm \cite{lloyd09}.
In Section \ref{sec:HS} we deal with the Hamiltonian simulation of the matrix of the Poisson equation. 
The exponential of the multidimensional Hamiltonian is 
the $d$-fold tensor product of the exponential of one dimensional Hamiltonian. 
It is possible to diagonalize the one dimensional Hamiltonian using the quantum Fourier transform. Thus it
it suffices to approximate the eigenvalues in a way leading to the desired accuracy in the result. 
We show the quantum circuit modules performing the eigenvalue
approximation and derive the overall simulation cost. 
In Section \ref{sec:TotCost} we derive the total cost for solving the the Poisson equation.
Section \ref{sec:Concl} is the conclusion.
In Appendix 1 we list a number of elementary quantum gates and in Appendix 2 we present a series of results concerning the accuracy and the
cost of the approximations we use throughout the paper. 

\begin{figure}
\centerline{
\includegraphics[scale=1.2]{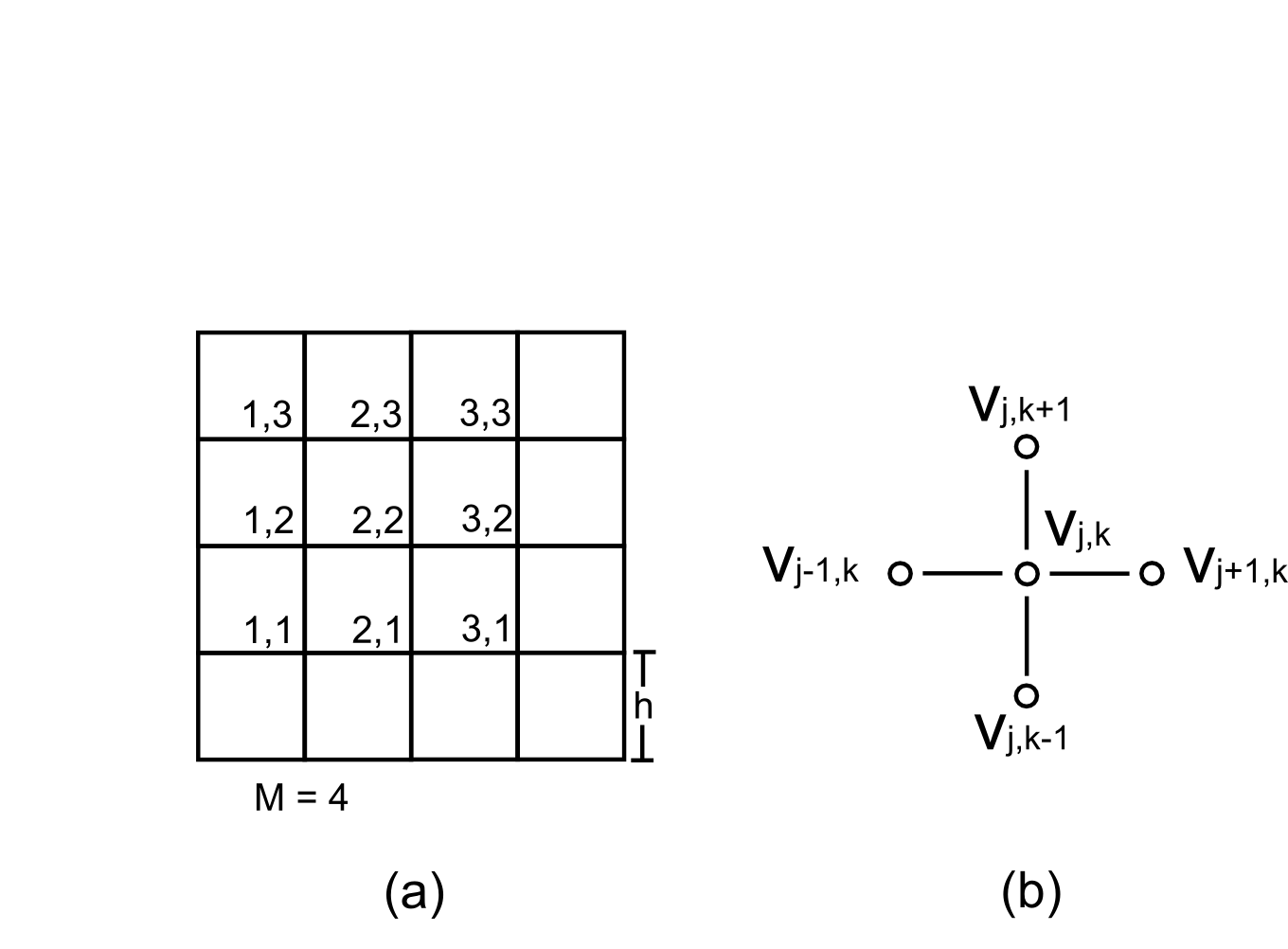} \\
}
\caption{Discretization of the square domain and notation for indexing the nodes.}
\label{fig:discrete}
\end{figure}

\section{Discretization} \label{sec:DP}

\subsection{One dimension}
We start with the one-dimensional case to introduce the matrix $L_h$ that we will use later in expressing the $d$-dimensional 
discretization of the Laplacian, using Kronecker products.
We have

\begin{eqnarray} 
&&-\frac{d^2u(x)}{dx^2}=f(x),  \text{$\quad$}x\in(0,1) \label{eq:1d}\\
&&u(0)=u(1)=0 \nonumber
\end{eqnarray}
\\*
where $f$ is a given smooth function and $u$ is the solution we want to compute. 
We discretize the problem with mesh size $h=1/M$ and we compute an approximate solution $v$ at $M+1$ grid points $x_i=ih$, $i=0,\dots,M$.
Let $u_i=u(x_i)$ and $f_i=f(x_i)$, $i=0,\dots,M$. 

Using finite differences at the grid points to approximate the second derivative (\ref{eq:1d}) becomes

\begin{equation} \label{eq:du2dx2}
-\frac{d^2u(x)}{dx^2}|_{x=x_i}=\frac{2u_i-u_{i-1}-u_{i+1}}{h^2}-\xi_i
\end{equation}
where $\xi_i$ is the truncation error and can be shown to be $O(h^2||\frac{d^4u}{dx^4}||_{\infty})$ if $f$ has fourth derivative uniformly bounded by a constant \cite{demmel97}. 

Ignoring the truncation error, we solve 
\begin{equation} \label{eq:discrete_1d}
h^{-2}(-v_{i-1}+2v_i-v_{i+1})=f_i \quad 0<i<M.
\end{equation}

With boundary condition $v_0=0$ and $v_M=0$, we have $M-1$ equations and $M-1$ unknowns $v_1,...,v_{M-1}$:

\begin{equation} \label{eq:matrix_1d}
h^{-2} \cdot L_h
\left( \begin{array}{c}
v_1 \\
\vdots \\
\vdots \\
v_{M-1} \end{array} \right)
:= h^{-2}  
\left( \begin{array}{cccc}
2 & -1 & \quad & 0 \\ 
-1 & \ddots & \ddots & \\ 
\quad & \ddots & \ddots & -1 \\ 
0 & \quad & -1 & 2 
\end{array} \right)
\left( \begin{array}{c}
v_1 \\
\vdots \\
\vdots \\
v_{M-1} \end{array} \right)
=
\left( \begin{array}{c}
f_1 \\
\vdots \\
\vdots \\
f_{M-1} \end{array} \right)
\end{equation}
where $L_h$ is the tridiagonal $(M-1)\times(M-1)$ matrix above; for the properties 
of this matrix, including its eigenvalues and eigenvectors see \cite[Sec. 6.3]{demmel97}. 

\subsection{Two dimensions}

In two dimensions the Poisson equation is

\begin{eqnarray} 
&&-\frac{\partial^2{u(x,y)}}{\partial{x^2}}-\frac{\partial^2{u(x,y)}}{\partial{y^2}}=f(x,y),\text{$\quad$}(x,y)\in (0,1)^2 \label{eq:2d}\\
&&u(x,0)=u(0,y)=u(x,1)=u(1,y)=0, \quad x,y\in [0,1] \nonumber
\end{eqnarray}

We discretize this equation using a grid with mesh size $h=1/M$; see
Figure \ref{fig:discrete}.
Each node is indexed $u_{j,k}$, $j,k\in\{1,2,...,M\}$ (Figure \ref{fig:discrete}(a) and (b)). 
We approximate the second derivatives using 
\begin{eqnarray*}
\frac{\partial^2u}{\partial x^2}(x,y)
&\approx&\frac{u(x-h,y)-2u(x,y)+u(x+h,y)}{h^2} \\
\frac{\partial^2u}{\partial y^2}(x,y)
&\approx&\frac{u(x,y-h)-2u(x,y)+u(x,y+h)}{h^2}.
\end{eqnarray*}

Omitting the truncation error, and denoting by $-\Delta_h$ the discretized Laplacian
we are led to solve 

\begin{equation} \label{eq:discrete}
h^{-2}\left((-v_{j-1,k}+2v_{j,k}-v_{j+1,k})+(-v_{j,k-1}+2v_{j,k}-v_{j,k+1})\right)=f_{j,k},
\end{equation}
where $f_{j,k} = f(jh,kh)$, $j,k = 1,2, \ldots, M-1$ and $v_{j,k} =0 $ if $j$ or $k \in \{0,M\}$ i.e., when we have 
a point that belongs to the boundary.

Using the fact that the solution is zero at the boundary, we reindex (\ref{eq:discrete}) to obtain
\begin{equation} \label{eq:discrete_transform}
h^{-2}(4v_i-v_{i-1}-v_{i+1}-v_{i-M+1}-v_{i+M-1})=f_i \quad i= 1,2,\ldots,(M-1)^2,
\end{equation}

Equivalently, we denote this system by 
\[-\Delta_h \vec v = \vec{f_h} ,\]
where $\Delta_h$ is the discretized Laplacian.

For example, when $M=4$, as in Figure \ref{fig:discrete}, 
we have that $\vec{v}=[v_1,...,v_9]^T$. Furthermore (\ref{eq:discrete_transform})  becomes

\begin{equation} \label{eq:2d_3by3}
h^{-2} A
\left( \begin{array}{c}
v_1 \\
\vdots \\
v_9 \end{array} \right)
:=h^{-2} 
\left( \begin{array}{ccc}
B & -I & \\
-I & B & -I \\
& -I & B 
\end{array} \right)
\left( \begin{array}{c}
v_1 \\
\vdots \\
v_9 \end{array} \right)=
\left( \begin{array}{c}
f_1 \\
\vdots \\
f_9 \end{array} \right),
\end{equation}
where $I$ is the $3 \times 3$ identity matrix, $B$ is 
\begin{equation*}
\left( \begin{array}{ccc}
4  & -1 & \\
-1 & 4  & -1 \\
   & -1 & 4  
\end{array} \right)
\end{equation*}
$A$ is a Hermitian matrix with a particular block structure that is independent of $M$. 

In particular, on a square grid with mesh size $h=1/M$ we have
\begin{equation}
\label{eq:Delta_h}
-\Delta_h = h^{-2} A  
\end{equation}
and $A$ can be 
expressed in terms of $L_h$ as follows:
\begin{equation} \label{eq:A_block}
A=
\left( \begin{array}{cccccc}
L_h+2I   & -I      & 0      & \cdots & \cdots & 0      \\
-I      & L_h+2I   & -I      & 0      & \cdots & 0      \\
0      & -I      & \ddots & \ddots & 0      & \vdots \\
\vdots & 0      & \ddots & \ddots & -I      & 0      \\
\vdots & \vdots & 0      & -I      & L_h+2I   & -I      \\
0      & 0      & \cdots & 0      & -I      & L_h+2I
\end{array} \right)
\end{equation}
and its size is $(M-1)^2\times(M-1)^2$ \cite{demmel97}.

Recall that $L_h$ is the $(M-1)\times(M-1)$ matrix shown in (\ref{eq:matrix_1d}) 
and $I$ is the $(M-1)\times(M-1)$ identity matrix. 
Moreover, $A$ can be expressed using Kronecker products as follows

\begin{equation} \label{eq:AT}
A=L_h\otimes{I}+I\otimes{L_h} .
\end{equation}

\subsection{$d$ dimensions}

We now consider the problem in $d$ dimensions. Consider the Laplacian 
\[\Delta = \sum_{k=1}^d \frac{\partial^2}{\partial x_k^2}.\]
We discretize $\Delta$ on a grid with mesh size $h=1/M$ using divided differences.

As before, this leads to a system of linear equations 
\begin{equation}
\label{eq:d-dim-general}
-\Delta_h\vec{v} = \vec{f_h}.
\end{equation}
Note that $-\Delta_h = h^{-2} A$ is symmetric positive definite matrix and $A$ is given by
\begin{equation*} 
A=\underbrace{L_h\otimes{I}\otimes \cdots \otimes{I}}_\text{$d$ matrices}+
I\otimes{L_h}\otimes I\otimes \cdots \otimes{I}+\cdots +
I\otimes\cdots \otimes I\otimes{L_h},
\end{equation*}
and has size $(M-1)^d\times(M-1)^d$. 
$L_h$ is the $(M-1)\times(M-1)$ matrix shown in (\ref{eq:matrix_1d}) and $I$ is the $(M-1)\times(M-1)$ identity matrix.
See \cite{demmel97} for the details.

Observe that the matrix exponential has the form
\begin{equation} \label{eq:A_d_dim}
e^{iA \gamma}=\underbrace{e^{iL_h\gamma}  \otimes \cdots \otimes e^{iL_h\gamma}}_{d \,\, {\rm matrices}},
\end{equation}
for all $\gamma \in \mathbb{R}$, where $i = \sqrt{-1}$.
We will use this fact later in deriving the quantum circuit solving the linear system.

\section{Quantum circuit} \label{sec:QC}
\label{sec:overview}

\begin{figure}
\centerline{
\Qcircuit @C=0.7em @R=0.5em @!R{
\lstick{Anc.} & & & |0\rangle\qquad & \qw & \qw & \qw & \qw & \qw & \qw & \qw & \qw & \qw & \qw & \qw & \qw & \qw & \qw & \qw & \gate{R_y} & \qw & & & & & & { \tilde{h}_j{\ket{1}}+\sqrt{1-\tilde{h}_j^2}\ket{0}} \\
\lstick{Reg.L} & & & |0\rangle\qquad & {/} \qw & \qw & \qw & \qw & \qw & \qw & \qw & \qw & \qw & \multigate{2}{\text{INV}} & \qw & & & |\hat{h}_j\rangle\quad & & \ctrl{-1} & \multigate{4}{U^\dagger} & \qw & & |0\rangle \\
{\quad^{b=3\lceil{\log\varepsilon^{-1}}\rceil\text{ qubits}}}\\
\lstick{Reg.C} & & & |0\rangle\qquad & {/} \qw & \gate{W} & \ctrl{2} & \gate{FT^\dagger} & \qw & & & |k_j\rangle\quad & & \ghost{\text{INV}} & {/} \qw & \qw & \qw & \qw & \qw & \qw & \ghost{U^\dagger} & \qw & & |0\rangle \\
{\quad^{n=O(\log(E/\varepsilon))\text{ qubits}}}\\
\lstick{Reg.B} & & & |f_h\rangle\qquad & {/} \qw & \qw & \gate{\text{HAM-SIM}} & \qw & & \sum_j\beta_j|u_j\rangle & & & & \qw & \qw & \qw & \qw & \qw & \qw & \qw & \ghost{U^\dagger} & \qw & & |b\rangle \\
}}
\caption{Overview of the circuit for solving the Poisson equation. Wires with `/' represent registers or groups of qubits. $W$ denotes the Walsh-Hadamard transform which applies Hadamard gate on every qubit of the register. $FT$ represents the quantum Fourier transform. \ \lq HAM-SIM\rq\  is the Hamiltonian Simulation subroutine that implements the operation $e^{-2\pi{i}\Delta_h/E}$. \lq INV\rq\  is the subroutine that computes $\lambda^{-1}$. $U^\dagger$ represent uncomputation, which is the adjoint of all the operations before the controlled $R_y$ rotation.}
\label{fig:general_circuit}
\end{figure}

We derive a quantum circuit solving the system $-\Delta_h \vec v = \vec{f_h}$, where $h = 1/M$ and
without loss of generality we assume that $M$ is a power of two. We obtain a solution of the system
with error $O(\varepsilon)$. The steps below are similar to those in~\cite{lloyd09}:

\begin{enumerate}
\item As in \cite{lloyd09} assume the right hand side vector $\vec{f_h}$ has been prepared quantum mechanically 
as a quantum state $\ket{f_h}$ and stored in the quantum register $B$.  Note $|f_h\rangle=\sum_{j=0}^{(M-1)^d-1}\beta_j|u_j\rangle$ where $|u_j\rangle$ denote the eigenstates of $-\Delta_h$ and $\beta_j$ are the coefficients.
\item Perform phase estimation using the state $\ket{f_h}$ in the bottom register and
the unitary matrix $e^{-2\pi i\Delta_h /E}$, where 
$\log_2 E = \lceil \log d \rceil + \log (4M^2)$. 
The number of qubits in the top register of phase estimation is $n = O(\log (E/\varepsilon))$.
\item Compute an approximation of the inverse of the eigenvalues $\lambda_j$. Store
the result on a register $L$ composed of $b = 3\lceil \log\varepsilon^{-1}\rceil$ qubits (Figure~\ref{fig:general_circuit}). 
The approximation error of the reciprocals is at most $\varepsilon$. 
\item Introduce an ancilla qubit to the system. Apply a controlled rotation
on the ancilla qubit. The rotation operation is controlled be the register $L$
which stores the reciprocals of the eigenvalues of $-\Delta_h$ (Figure~\ref{fig:general_circuit}). The controlled rotation
results to $\sqrt{1 - (C_d/\lambda_j)^2} \ket{0} + (C_d/\lambda_j) \ket{1}$, where $C_d$ is a constant.
\item Uncompute all other qubits on the system except the qubit introduced on the 
previous item.
\item Measure the ancilla qubit. If the outcome is $1$, the bottom register of 
phase estimation collapses to the state $\sum_{j=0}^{(M-1)^d-1} \beta_j{\lambda_j}^{-1} \ket{u_j}$ up to a normalization factor,
where $\ket{u_j}$ denote the eigenstates of $-\Delta_h$. This is equal to the normalized solution of the system.
If the outcome is $0$, the algorithm has failed and we have to repeat it. An alternative would be to include
amplitude amplification to boost the success probability. Amplitude amplification has been considered 
in the literature extensively and we do not deal with it here.
\end{enumerate}

\subsection{Error analysis}
\label{sec:Analysis}

We carry out the error analysis to obtain the implementation details.
For $d=1$ the eigenvalues of the second derivative are
\[4 M^2 \sin^2(j\pi/(2M)) \quad j=1,\dots,M-1.\]
For $d>1$, the eigenvalues of $-\Delta_h$ are given by sums of the one-dimensional eigenvalues, i.e.,
\[\sum_{k=1}^d \left[4 M^2 \sin^2(j_k\pi/(2M))\right] \quad j_k=1,\dots,M-1, \; k=1,\dots,d.\]
We consider them in non-decreasing order and denote them by $\lambda_j$, $j=1,\dots, (M-1)^d.$
Then $\lambda_1 = 4dM^2 \sin^2(\pi/(2M))$ is the minimum eigenvalue and 
$\lambda_{(M-1)^d} = 4dM^2 \sin^2(\pi (M-1)/(2M))\le 4 d M^2$ is the maximum eigenvalue.

Define $E$ by 
\begin{equation}
\label{eq:def-E}
\log_2 E=\lceil \log_2 d\rceil + \log_2 (4M^2).
\end{equation}

Then the eigenvalues are bounded from above by $E$. Recall that we have already assumed that $M$ is a power of two. Then $E=2^{\lceil{\log_2{d}}\rceil}4M^2\in\mathbb{N}$.

Note that the implementation accuracy of the eigenvalues determines the accuracy of the system solution.

Our algorithm uses approximations $\hat \lambda_j$, such that $|\lambda_j - \hat\lambda_j|\leq \frac{17 \cdot E}{2^\nu} \leq \varepsilon$; see Theorem \ref{th:d-error-eig} in Appendix 2. We use $n = \log_2 E + \nu$ bits to
represent each eigenvalue, of which the $\log_2 E$ most significant bits hold each
integer part and the remaining bits hold each fractional part. Without
loss of generality, we can assume that $2^\nu \gg E$.
More precisely, we consider an approximation $\hat \Delta_h$ of matrix $\Delta_h$ such that
the two matrices have the same eigenvectors while their eigenvalues differ by at most $\varepsilon$.

We use phase estimation with the unitary matrix $e^{-i\hat \Delta_h t_0 /E}$ 
whose eigenvalues are $e^{2\pi i \hat \lambda_j t_0/(E 2\pi)}$. Setting $t_0=2\pi$ we
obtain the phases $\phi_j = \hat \lambda_j /E \in [0,1)$.
The initial state of phase estimation is (Figure~\ref{fig:general_circuit})
\[\ket{0}^{\otimes n}\ket{f_h} = \sum_{j=1}^{(M-1)^d} \beta_j \ket{0}^{\otimes n}\ket{u_j},\]
where $\ket{u_j}$ is the $j$th eigenvector of $-\Delta_h$ and $\beta_j = \Braket{u_j}{f_h}$,
for $j=1,2,\ldots ,(M-1)^d$.
Since we are using finite bit approximations of the eigenvalues, we have
\[\phi_j = \frac{\hat \lambda_j}{E}= 
\frac{\hat \lambda_j 2^\nu}{2^n}.
\]
Then $\phi_j \, 2^n$ is an integer and phase estimation succeeds with probability $1$ (see~\cite[Sec. 5.2, pg. 221]{nielsen00} for details).

The state prior to the application of the inverse Fourier transform in phase estimation is
\begin{equation}
\label{eq:prior-F}
\sum_{j=1}^{(M-1)^d} \beta_j \frac{1}{2^{n/2}} \sum_{k=0}^{2^n-1} e^{2\pi i \phi_j k} \ket k \ket{u_j}.
\end{equation}
After the application of the inverse Fourier transform to the first $n$ qubits we obtain
\[\sum_{j=1}^{(M-1)^d} \beta_j \ket{k_j}\ket{u_j},\]
where 
\begin{equation}
\label{eq:kj}
k_j = 2^n \phi_j = 2^n  \hat\lambda_j / E=\hat\lambda_j2^\nu\in\mathbb{N}
\end{equation}

Now we need to compute the reciprocals of the eigenvalues. Observe that  
\begin{eqnarray*}
\lambda_1/d &=& 4M^2 \sin^2(\pi/(2M)) = 4 M^2 ( \pi / (2M) + O(M^{-3}))^2 \\
&=& \pi^2 + O(M^{-2}) > 5.
\end{eqnarray*}
where the last inequality holds trivially for $M$ sufficiently large.
This implies $\hat \lambda_j / C_d \geq \hat \lambda_1 / C_d \geq 4$, where 
$C_d = 2^{\lfloor \log_2 d\rfloor}$, for 
$M$ sufficiently large.
We obtain $k_j = 2^n \hat \lambda_j / E \geq \hat \lambda_1 \geq 4 C_d$.

Append $b$ qubits initialized to $\ket 0$ on the left ($Reg.L$ in Figure \ref{fig:general_circuit}), to obtain 
\[\sum_{j=1}^{(M-1)^d} \beta_j \ket{0}^{\otimes b}\ket{k_j}\ket{u_j}.\]
Note that from (\ref{eq:kj}) $k_j$, $\hat \la_j$ and $\hat \lambda_j / C_d$ have the 
same bit representation. The difference between the integer $k_j$ and the other two numbers
is the location of the decimal point; it is located after the $\log_2 E$ most significant
bit in $\hat \la_j$, and after the $\log_2 (E/C_d)$ most significant bit in 
$\hat \lambda_j/C_d$. Therefore, we can use the labels $\ket{k_j}$, $\ket{\hat \lambda_j}$
and $\ket{\hat{\lambda}_j / C_d}$ interchangeably, and write the state above as
\[\sum_{j=1}^{(M-1)^d} \beta_j \ket{0}^{\otimes b}\ket{\hat \lambda_j/C_d}\ket{u_j}.\]
Now we need to compute $h_j := h (\hat \lambda_j / C_d) = C_d / \hat\lambda_j$. We do this using Newton 
iteration. We explain the details in Section \ref{sec:Newton}.
We obtain an approximation 
$\hat h_j$
such that 
\begin{equation}
\label{eq:error_Newton}
\left| \hat h_j - h_j \right| \leq \varepsilon_0^2,
\end{equation}
where $\varepsilon_0 = \min \{\varepsilon, E^{-1}\}$. We store this approximation in the register
composed of the leftmost $b = 3 \lceil \log_2 \varepsilon_0^{-1} \rceil$ qubits.

This leads to the state 
\[\sum_{j=1}^{(M-1)^d} \beta_j \ket{\hat h_j}\ket{\hat \la_j/C_d}\ket{u_j}.\]
We append, on the left, a qubit initialized at $\ket 0$ ($Anc.$ in Figure~\ref{fig:general_circuit}). We get
\[\sum_{j=1}^{(M-1)^d} \beta_j \ket 0 \ket{\hat h_j}\ket{\hat \la_j/C_d}\ket{u_j}.\]
We need to perform the conditional rotation 
\[R \ket 0 \ket{\omega}  = \left( \omega \ket 1 + \sqrt{1 - \omega^2} \ket 0 \right) \ket{\omega}, \quad
0<\omega < 1.\]
For this, we will approximate the first qubit by
\[\omega^\prime \ket 1 + \sqrt{1 - (\omega^\prime)^2} \ket 0,\]
with $|\omega -\omega^\prime|\le \varepsilon_1^2$, $\varepsilon_1= \min\{ \varepsilon, 1/(4M^2)\}$.
We discuss the cost of implementing this approximation in Section \ref{sec:R}.

The result of approximating the conditional rotation is to obtain
$\ket{\tilde h_j}$, where $\tilde h_j$ is 
a $q=\Theta(\log_2 \varepsilon^{-1}_1)$ bit number less than $1$ satisfying $|\tilde h_j - \hat h_j|\le\varepsilon_1^2$ and, therefore,
\begin{equation}
\label{eq:NewPlusCondRot}
|\tilde h_j - h_j|\le \varepsilon_0^2 + \varepsilon_1^2,
\end{equation}
for each $j=1,\dots,(M-1)^d$.

Ignoring the ancilla qubits needed for implementing the approximation of the conditional rotation, we have the state 
\[\sum_{j=1}^{(M-1)^d} \beta_j \left( \tilde h_j \ket 1 + \sqrt{1 - \tilde h_j^2} \ket 0 \right)
 \ket{\hat h_j}\ket{\hat \la_j/C_d}\ket{u_j}.\]
 Uncomputing all the qubits except the leftmost gives the state
\[\ket \psi : = \sum_{j=1}^{(M-1)^d} \beta_j  \left( \tilde h_j \ket 1 + \sqrt{1 - \tilde h_j^2} \ket 0 \right)
  \ket{0}^{\otimes b} \ket{0}^{\otimes n} \ket{u_j}\]  

Let $P_1=\Ketbra{1}{1}\otimes I$ be the projection acting non-trivially on the first qubit. 
The system $-\Delta_h \vec v = \vec f_h$ has solution $\sum_{j=1}^{(M-1)^d} \beta_j\frac{1}{\la_j} \ket{u_j}$. 
We derive the error as follows
\begin{eqnarray}
\label{eq:syserr}
&& C_d^{-1}\left\| \sum_{j=1}^{(M-1)^d} b_j \frac{C_d}{\lambda_j} \ket{1} \ket{0}^{\otimes{(b+n)}} \ket{u_j} -
P_1\ket{\psi} \right\| =  \nonumber\\ 
&&
C_d^{-1}\left\|\sum_{j=1}^{(M-1)^d} \beta_j \frac{C_d}{\lambda_j} \ket{u_j} - \sum_{j=1}^{(M-1)^d} \beta_j \tilde h_j \ket{u_j}\right\| 
 = \nonumber \\
&&
C_d^{-1}\left\|\sum_{j=1}^{(M-1)^d} \beta_j \frac{C_d}{\lambda_j} \ket{u_j} - \sum_{j=1}^{(M-1)^d} \beta_j (\tilde h_j - 
h_j + h_j)\ket{u_j}\right\| 
 \leq  \nonumber \\
&&
\quad\quad \left\|\sum_{j=1}^{(M-1)^d} \beta_j \left( \frac{1}{\lambda_j} - \frac{1}{\hat \lambda_j} \right)\ket{u_j} \right\| 
+ \varepsilon_0^2 + \varepsilon_1^2 \leq \frac{17 E}{2^\nu} + \varepsilon_0^2 +\varepsilon_1^2,
\end{eqnarray}
where the second from last inequality is obtained using (\ref{eq:NewPlusCondRot}) and the last inequality is due to the 
fact that 
\[ \left|\frac{1}{\lambda} - \frac{1}{\hat\lambda}\right| \leq |\lambda - \hat \lambda|, \quad
\lambda , \hat \lambda > 1\]
Setting $\nu = \lceil \log_2 (17 E /\varepsilon)\rceil$ gives error $\varepsilon (1+o(1))$
and the number of matrix exponentials used by the algorithm is $O(\log_2 (E / \varepsilon))$.
Therefore, if we measure the first qubit of the state $\ket \psi$ and the outcome is $1$ the state collapses to 
a normalized solution of the linear system. 

\subsection{Computation of $\la^{-1}$}
\label{sec:Newton}

In this part we deal with the computation of the reciprocals of the 
eigenvalues, which is marked as the \lq INV\rq\  module in Figure~\ref{fig:general_circuit}. For this we use Newton iteration to approximate $v^{-1}$, $v>1$.
We perform $s$ iterative steps and 
obtain the approximation $\hat x_s$. The input and the output of
each iterative step are $b$ bit numbers. All the calculations in each step
are performed in fixed precision arithmetic.
The initial approximation is $\hat x_0 = 2^{-p}$, $2^{p-1} < v \le 2^p$.
(We use the notation $\hat x_i$ to emphasize that these values have been
obtained by truncating a quantity $x_i$ to $b$ bits of accuracy, $i = 0,\ldots ,s$).

Theorem \ref{th:Newton_error} of Appendix 2 gives the error of Newton iteration which is
\[|\hat x_s - v^{-1} | \leq \varepsilon_0^2 \leq \varepsilon,\]
where we have $\varepsilon_0=\min\{ \varepsilon, E^{-1}\}$,  
$s= \lceil \log_2\log_2 (2/\varepsilon_0^2)\rceil$ and the number of bits 
satisfies
$b\geq 2\lceil \log_2 \varepsilon_0^{-1}\rceil + O(\log_2 \log_2 \log_2 \varepsilon_0^{-1})$.

Therefore, it suffices that the module of the quantum circuit that computes $1/\lambda_j$ carries each iterative step 
with $3\lceil \log_2 \varepsilon_0^{-1}\rceil$ qubits of accuracy. 

The quantum circuit computing the initial approximation $\hat x_0$, 
of the Newton iteration is given in Figure~\ref{fig:x0_circuit}.
The second register holds $\ket v$ and is $n$ qubits long,
of which the first $\log_2 (E /C_d)$ qubits 
represent the integer part of $v$ and the remaining ones its fractional part. 
The first register is $b$ qubits long. 
Recall that $\hat \la_j / C_d \geq 4$. So input values below $4$ do not correspond to meaningful
eigenvalue estimates and we don't need to compute their reciprocals altogether; they can be ignored.
Hence the circuit implements the unitary transformation $\ket{0}^{\otimes b} \ket v \to  \ket{0}^{\otimes b} \ket v$, 
if the first 
$\log_2 (E /C_d) -2$ bits of $v$ are all zero. 
Otherwise, it implements the initial approximation $\hat x_0$  through the transformation
$\ket{0}^{\otimes b} \ket v \to \ket{\hat x_0} \ket v$. 

\begin{figure}
\Qcircuit @C = 2.3em @R = 2em {
  & & & \lstick{\ket{u_{b-1}=0}}    & \qw  & \qw    & \qw   & \dots & & \targ    & \ctrl{6} & \qw       & \qw & \qw \gategroup{1}{14}{6}{14}{1.2em}{\}} & & \\
  & & & \vdots &      \\
  \raisebox{-3em}{$b$}& & & \lstick{\ket{u_{b^\prime + 1}=0}}    & \qw   & \targ & \ctrl{4} & \dots & & \qw   & \qw  & \qw & \qw & \qw &\raisebox{-3em}{$\ket{x_0}$}  &  \\
  & & & \lstick{\ket{u_{b^\prime}=0}}& \targ & \qw & \qw      & \dots & & \qw    & \qw   & \qw   & \qw & \qw \gategroup{1}{2}{6}{2}{1.2em}{\{} & & \\
  & & & _{b^\prime = b - 2 - \log_2 \frac{E}{C_d}} & & & & \vdots &   \\
  & & & \lstick{\ket{u_{0}=0}}  & \qw  & \qw &\qw   & \dots & & \qw & \qw  & \qw  & \qw & \qw \gategroup{8}{2}{13}{2}{1em}{\{} & & \\
& & & \lstick{\ket{0}}     & \targ &\ctrlo{-4} & \targ   & \cdots & &\ctrlo{-6} &\targ   &\targ & \gate{X} & \qw & \ket{0} & \\
  & & & \lstick{\ket{v_{n-1}}}& \ctrl{-4} & \qw & \qw   &\dots &  & \qw &\qw  &\ctrlo{-1} & \qw & \qw & & \\ 
  & & & \lstick{\ket{v_{n-2}}}& \qw       & \ctrl{-2}  & \qw   &\dots &  & \qw &\qw  &\ctrlo{-1} & \qw & \qw & \raisebox{-6em}{$\ket v$} & \\
  \raisebox{-2em}{$n$} & & & & & & & \vdots & \\
  & & & \lstick{\ket{v_{n^\prime}}}     & \qw       & \qw & \qw   & \dots &  & \ctrl{-4}  & \qw     & \ctrlo{-2}  & \qw & \qw & & \\
  & & & _{n^\prime = n+2-\log_2 \frac{E}{C_d}} & & & & \vdots &\\
  & & & \lstick{\ket{v_{0}}}  & \qw  &  \qw & \qw   & \dots & & \qw  & \qw   &\qw        & \qw & \qw \gategroup{8}{14}{13}{14}{0.7em}{\}}& &
}
\caption{The quantum circuit computing the initial approximation $\hat{x}_0=2^{-p}$ of Newton iteration
for approximating $v^{-1}$, $2^{p-1}\le v\le 2^p$. See Appendix 1 for definitions of the basic gates.}
\label{fig:x0_circuit}
\end{figure}
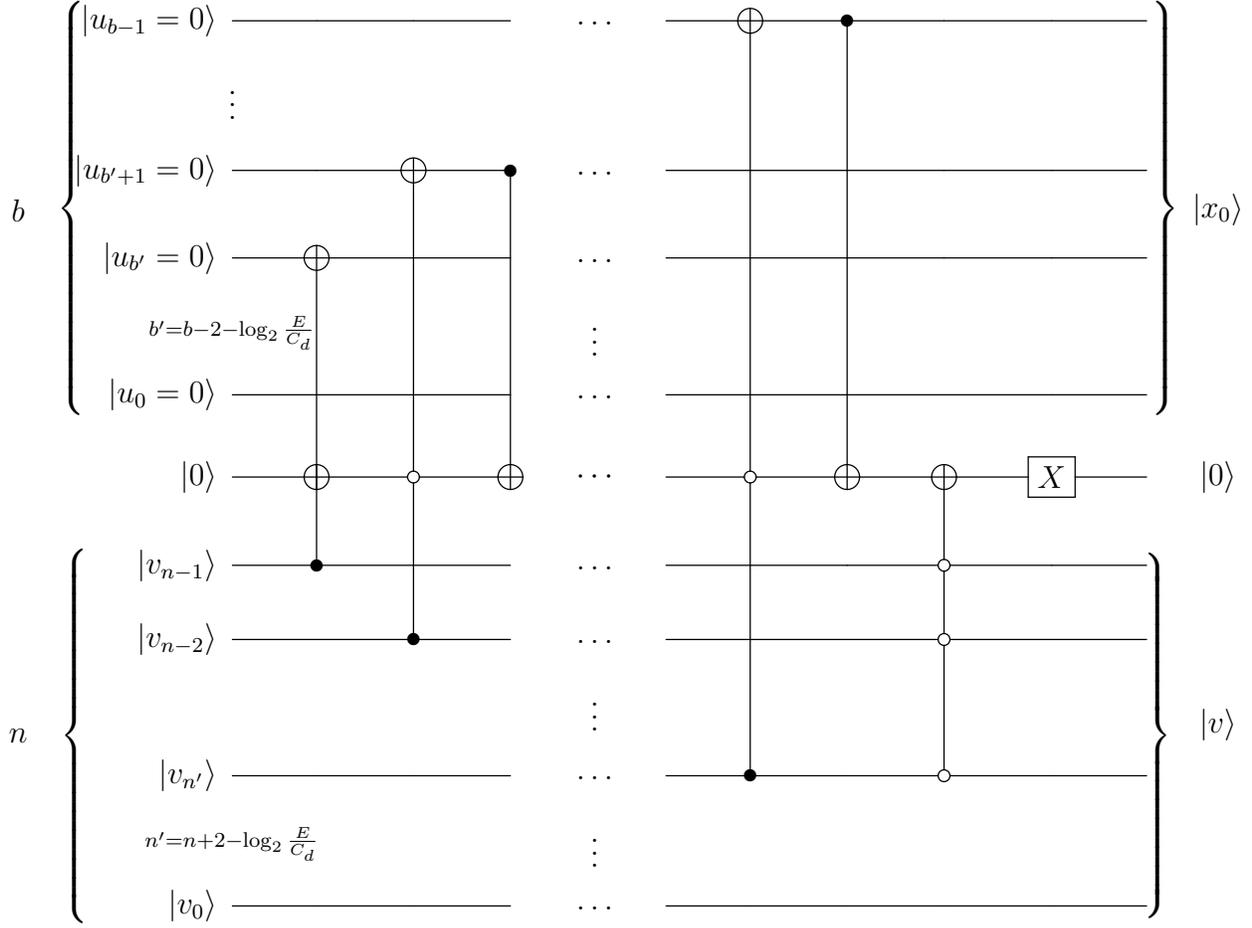

Each iteration step $x_{i+1} = -vx_{i}^2 + 2 x_i$ 
is implemented using a quantum circuit of the form shown in Figure~\ref{fig:iteration} that computes 
$\ket{\hat x_{i}} \ket v \to \ket{\hat x_{i+1}} \ket v$. 
This involves quantum circuits for addition and multiplication which have been studied in the literature \cite{vedral96}.

\begin{figure}
\[\Qcircuit @C = 2.7em @R = 2em {
\lstick{\ket{\hat x_i}}  & \multigate{1}{-vx_i^2 + 2x_i} & \qw &  \ket{\hat x_{i+1}} \\
\lstick{\ket{v}} & \ghost{-vx_i^2 + 2x_s}        & \qw &  \ket{v} 
}\]
\caption{Circuit implementing each iterative step of the Newton method.}
\label{fig:iteration}
\end{figure}
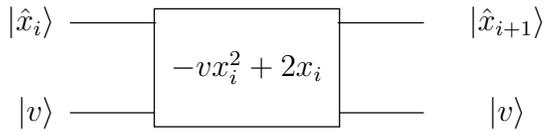

The register holding $\ket{v}$ is $n$ qubits long and the register holding the $\ket{\hat x_i}$ and $\ket{\hat x_{i+1}}$ is
$b$ qubits long.
Note that internally the modules performing the iteration steps may use more than $b$ qubits, say,
double precision, so that
the addition and multiplication operations required in the iteration are carried out exactly and then 
return the $b$ most significant 
qubits of the result. The total number of qubits required for the implementation of each of
these modules is $O(\log \varepsilon_0^{-1})$ and the
total number of gates is a low degree polynomial in $\log \varepsilon_0^{-1}$.

\subsection{Controlled rotation}
\label{sec:R}

We now consider the implementation of the controlled rotation
\[R \ket 0 \ket{\omega}  = \left( \omega \ket 1 + \sqrt{1 - \omega^2} \ket 0 \right) \ket{\omega}, \quad
0<\omega < 1.\]
Assume for a moment that we have obtained $\ket{\theta}$, a $q$ qubit state, corresponding to an angle $\theta$ 
such that $\sin \theta$ approximates $\omega$. Then we can use
controlled rotations $R_y$ about the $y$ axis to implement $R$. 
We consider the  binary representation of $\theta$ and have 
\[\theta = .\theta_1 \ldots \theta_q = \sum_{j=1}^q \theta_j 2^{-j}, \quad \theta_j \in \{0,1\}.\]
Then
\begin{eqnarray*}
 R_y(2\theta)  &=& e^{-i \theta Y} = \left(  \begin{array}{cc}
\sqrt{1-\sin^2\theta} & -\sin\theta \\
\sin \theta & \sqrt{1-\sin^2\theta}
\end{array} \right) \\
&=&  \prod_{j=1}^q e^{-i Y \theta_j / 2^j} 
= \prod_{j=1}^q R_y^{\theta_j} \left( 2^{1-j}\right) ,
\end{eqnarray*}
where $Y$ is the Pauli $Y$ operator and $\theta\in[0,\pi/2]$. The detailed circuit is shown in Figure~\ref{fig:ctrl_Ry}.

\begin{figure}
\[\Qcircuit @R =2.3em @C=2em{
\ket 0 &  & \gate{R_y(1)} & \gate{R_y(1/2)} & \qw & \dots & & \gate{R_y(1/2^{q-1})} & \qw \\
\ket{\theta_1} &  & \ctrl{-1} & \qw & \qw & \dots & & \qw & \qw \\
\ket{\theta_2} &  & \qw & \ctrl{-2} & \qw & \dots & & \qw & \qw \\
\vdots & \\
\ket{\theta_b} &  & \qw & \qw & \qw & \dots & & \ctrl{-4} & \qw 
}\]
\caption{Circuit for executing the controlled $R_y$ rotation. See Appendix 1 for definitions of basic gates.}
\label{fig:ctrl_Ry}
\end{figure}
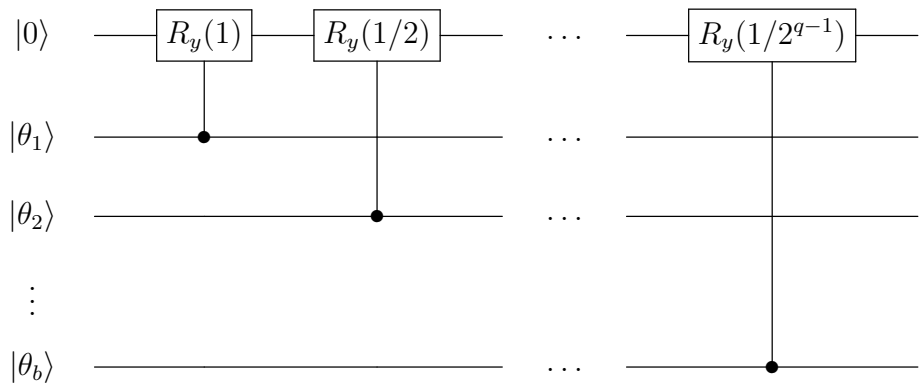

We now turn to the algorithm that calculates $\ket{\theta}$ from $\ket{\omega}$.
Since $\omega$ corresponds to the reciprocal of an approximate eigenvalue of the discretized Laplacian,
we know that $\sin^{-1}(\omega)$ belongs to the first quadrant and  $\sin^{-1}(\omega)=\Omega(1/M^2)$. 
Therefore, we can find an angle $\theta$ such that $|\sin(\theta) - \omega|\le \varepsilon_1^2$, $\varepsilon_1=\min\{ 1/(4M^2),\varepsilon \}$, using 
bisection and an approximation of the sine function. 

In Appendix 2 we show the error in approximating the sine function
using fixed precision arithmetic. 
In Section \ref{sec:HS} we show the details of the resulting quantum algorithm computing the approximation to the sine function.
These results, with a minor adjustment in the number of bits needed can be used here. 
We won't deal with the details of the quantum algorithm for the sine function
in this section since we present them in Section \ref{sec:HS} that deals with the simulation of Poisson's matrix. 
We will only describe the steps of the algorithm and its cost.

\vskip 1pc
\noindent Algorithm:
\begin{enumerate}
\item Take as an initial approximation of $\theta$ the value $\pi/4$.
\item Approximate the $\sin(\theta)$ with error $\varepsilon_1^2/2$ using our algorithm for the sine function (details in section~\ref{sec:HS} and Appendix 2). Let $s_\theta$ denote this approximation.
\item If $s_\theta < \omega - \varepsilon_1^2/2$, set $\theta$ to be the midpoint of the right subinterval.
\item If $s_\theta > \omega + \varepsilon_1^2/2$, set $\theta$ to be the midpoint of the left subinterval.
\item Repeat the steps 2 to 4 $\lceil \log_2\varepsilon_1^{-2}\rceil + 1$ times.
\end{enumerate}

An evaluation at the midpoint of an interval yields a value that satisfies either the condition of step 3, or that of step 4,
or $|s_\theta-\omega|\le \varepsilon_1^2/2$.
If at any time both the conditions of steps 3 and 4 are false then $\theta$ will not change its value until the end.
Then, at the end, 
we have $|\sin(\theta)-\omega|\le |\sin(\theta)- s_\theta| + |s_\theta - \omega|\le \varepsilon_1^2$, since the error in computing the sine is $\varepsilon_1^2/2$.
On the other hand, 
if $\theta$ is updated until the very end of the algorithm the final value of theta also satisfies
$|\sin(\theta)-w|\le \varepsilon_1^2$, because in the final interval we have 
$|\sin(\theta)-\omega| \le |\theta - \sin^{-1}(\omega)|\le \varepsilon_1^2$.

In a way similar to that of Proposition 1 and Proposition 2 of Appendix 2 we carry out the steps of the algorithm in
$q$ bit fixed precision arithmetic,
$q=\max\{ 2\nu + 9, 13 + \nu + 2\log_2 M\}$ and sufficiently large $\nu$ to satisfy the accuracy requirements.
(The last expression for $q$  is slightly different form that in Proposition 2 because it accounts for the
fact that in the case we are dealing with here the angle is $\Omega(1/M^2)$). This gives us an approximation to the sine with error $2^{-(\nu-1)}$.
We set
\[\nu = \lceil \log_2 \varepsilon_1^{-2} \rceil+ 1.\]
Thus $\nu$ and $q$ are both $\Theta(\log_2\varepsilon_1)$.

The algorithm for the sine function is based on an approximation of the exponential function using repeated squaring.
Each square requires $O(q^2)$ quantum operations and $O(q)$ qubits. This is repeated $\nu$ times before
the approximation to the sine is obtained. Thus the cost of one bisection step requires $O(\nu q^2)$ quantum operations and
$O(\nu q)$ qubits. So, in terms of $\varepsilon_1$, the total cost of bisection is proportional to
$(\log_2 \varepsilon_1^{-1})^4$ quantum operations and $(\log_2 \varepsilon_1^{-1})^3$ qubits.

\section{Hamiltonian simulation of the Poisson matrix} \label{sec:HS}

\begin{figure} 
\centerline{
\Qcircuit @C=2.3em @R=2em @!R{
 {|1\rangle} & & \multigate{1}{e^{ih^{-2} L_h\gamma}} & \qw & \qw \\
 \lstick{\ket{j_1}} & \qw {/} & \ghost{e^{ih^{-2} L_h\gamma}} & \qw {/} & \qw \\
 {|1\rangle} & & \multigate{1}{e^{ih^{-2} L_h\gamma}} & \qw & \qw \\
 \lstick{\ket{j_2}} & \qw {/} & \ghost{e^{ih^{-2} L_h\gamma}} & \qw {/} & \qw \\
}
}
\caption{Quantum circuit for implementing $e^{-i\Delta_h \gamma}$, $\gamma \in \mathbb{R}$ for the two dimensional discrete Poisson equation. The subroutine of $e^{ih^{-2} L_h  \gamma}$ is shown in Figure \ref{fig:eAt_1d}. The registers holding $\ket{j_1}$, $\ket{j_2}$ are $m$ qubits each.}
\label{fig:eAt}
\end{figure}
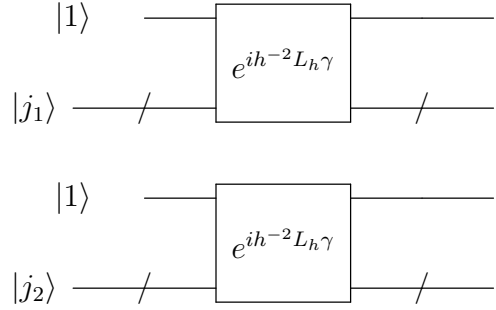

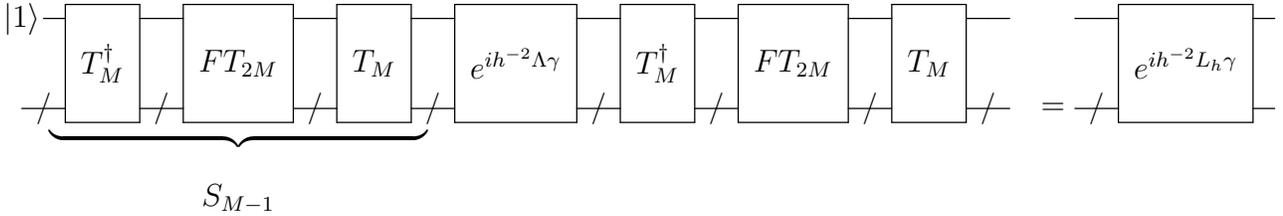
\begin{figure} 
\centerline{
\Qcircuit @C=0.7em @R=2em @!R{
 {|1\rangle} & & \multigate{1}{T_M^\dagger} & \qw & \multigate{1}{FT_{2M}} & \qw & \multigate{1}{T_M} & \qw & \multigate{1}{e^{ih^{-2}\Lambda{\gamma}}} & \qw & \multigate{1}{T_M^\dagger} & \qw & \multigate{1}{FT_{2M}} & \qw & \multigate{1}{T_M} & \qw & \qw & & & & \qw & \multigate{1}{e^{ih^{-2} L_h\gamma}} & \qw \\
 & \qw {/} & \ghost{T_M^\dagger} & \qw {/} & \ghost{FT_{2M}} & \qw {/} & \ghost{T_M} & \qw {/} & \ghost{e^{ih^{-2}\Lambda{\gamma}}} & \qw {/} & \ghost{T_M^\dagger} & \qw {/} & \ghost{FT_{2M}} & \qw {/} & \ghost{T_M} & \qw {/} & \qw  \gategroup{2}{3}{2}{7}{1em}{_\}} & & = & & \qw {/} & \ghost{e^{ih^{-2}L_h \gamma}} & \qw \\
& & & & \text{$S_{M-1}$} & & &
}
}
\caption{Quantum circuit for implementing $e^{ih^{-2}L_h \gamma}$, $\gamma \in \mathbb{R}$, where $L_h$ is the matrix in (\ref{eq:matrix_1d}). $S_{M-1}$ represents the sine transform matrix of size $(M-1)\times (M-1)$, $M=2^m$. This circuit acts on $m+1$ qubits.}
\label{fig:eAt_1d}
\end{figure}

\begin{figure}
\centerline{
\Qcircuit @C=1em @R=0.7em {
\lstick{|1\rangle } \gategroup{6}{3}{6}{4}{2em}{_\}}\gategroup{6}{5}{6}{6}{1.1em}{_\}} & \qw & \gate{B} & \gate{B^\dagger} & \ctrl{1} & \ctrl{1} & \qw \\
& \qw & \qw & \ctrlo{-1} \ctrlo{1} & \gate{X} & \multigate{4}{P_{m}} & \qw \\
& & & & & & & & & & & \lstick{|1\rangle} & \qw & \multigate{1}{T_M} & \qw & \qw \\
& & & \vdots & \vdots & & & & = & & & & \qw {/} & \ghost{T_M} & \qw {/} & \qw \\
& & & & & & \\
& \qw & \qw & \ctrlo{-1} & \gate{X} \qw[1] & \ghost{P_{m}} & \qw & \qw \\
\\
& & \text{\small ${\qquad\quad}D$} & & \text{\small $\qquad\qquad\pi$}
}
}
$\quad$ \\
$\quad$ \\
{(a) Generic circuit for $T_M=D\pi$, for details refer to \cite{roetteler01}. }
$\quad$ \\
\centerline{
\Qcircuit @C=1em @R=2em {
& \gate{B} & \qw & = & & \gate{H} & \gate{S} & \qw
}}
$\quad$ \\
\centerline{
\Qcircuit @C=1em @R=2em {
& \gate{X} & \qw & \qw & \qw & \cdots & & \qw & \qw & \qw & & & &  & \multigate{6}{P_m} & \qw \\
& \ctrl{-1} & \gate{X} & \qw & \qw & \cdots & & \qw & \qw & \qw & & & &  & \ghost{P_m} & \qw \\
& \ctrl{-2} & \ctrl{-1} & \gate{X} & \qw & \cdots & & \qw & \qw & \qw & & & &  & \ghost{P_m} & \qw \\
& \cdots & \cdots & \cdots & & \ddots & & & & & & = & & & \\
& & & & & & & & &  \\
& \ctrl{-1} & \ctrl{-1} & \ctrl{-1} & \qw & \cdots & & \gate{X} & \qw & \qw & & & &  & \ghost{P_m} & \qw \\
& \ctrl{-1} & \ctrl{-1} & \ctrl{-1} & \qw & \cdots & & \ctrl{-1} & \gate{X} & \qw & & & &  & \ghost{P_m} & \qw
}
}
$\quad$ \\
{(b) Implementation of $B$ and $P_m$ gates in (a).}
\caption{Quantum circuit for implementing $T_M$ in Equation \ref{eq:FSC} and \ref{eq:TN}. In (b), $P_m$ denotes the map $|x\rangle\rightarrow|x+1$ mod $2^n\rangle$ on $n$ qubits. Its implementation is described in \cite{pueschel98}. See Appendix 1 for the definitions of basic gates.}
\label{fig:TM}
\end{figure}
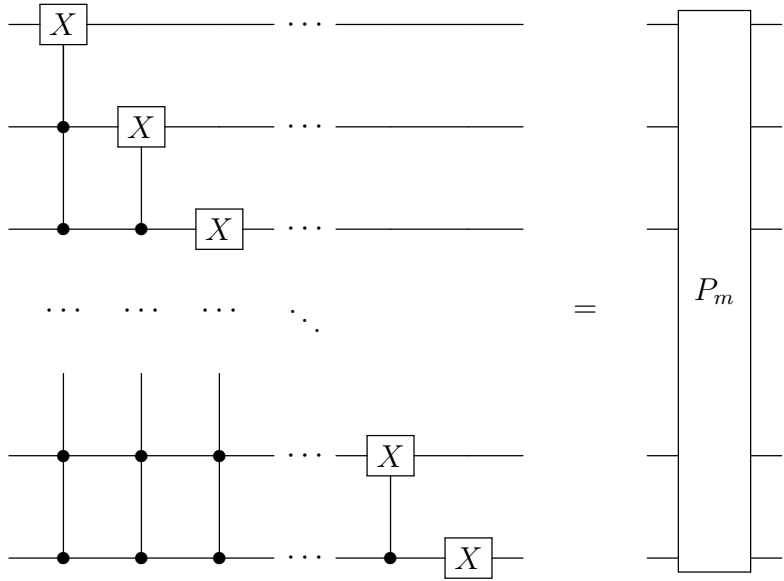

In this section we deal with the implementation of the \lq HAM-SIM\rq\  module (Figure~\ref{fig:general_circuit}) which effectively applies 
$e^{-i\hat\Delta_h t_0}$ onto register $B$. In our case the eigenvectors of the discretized Laplacian are known and we use approximations of the eigenvalues.
From (\ref{eq:Delta_h}) and (\ref{eq:A_d_dim}) we have
\begin{equation}
\label{eq:-Delta-sim}
e^{-i\Delta_h \gamma}=\underbrace{e^{i h^{-2} L_h \gamma}  \otimes \cdots \otimes e^{i h^{-2} L_h \gamma}}_{d \,\, {\rm matrices}}.
\end{equation}
Thus it suffices to implement $e^{i h^{-2} L_h \gamma}$, for certain $\gamma \in \mathbb{R}$, $\gamma
= 2\pi\cdot 2^t / E$, $t = 0,1,\ldots ,\log_2 E -1$ that are required in phase estimation.
This can be accomplished by considering the spectral decomposition $S \Lambda S$ of the matrix $L_h$, where $S$ is the matrix of the sine transform \cite{wickerhauser94,demmel97}. Then $S$ can be implemented using the quantum Fourier transform. We will implement an approximation of $\Lambda$. 

We remark that the quantum circuits presented here can be used  
in the simulation of the Hamiltonian $-\Delta + V$ using splitting formulas.
For results concerning Hamiltonian simulation using splitting formulas see \cite{berry07,zhang10,CW12}.

\subsection{One-dimensional case} \label{sec:1dcase}

We start with the implementation of $e^{i h^{-2} L_h \gamma}$, $\gamma = 2\pi 2^t / E$, 
$E = 4M^2$ when $d=1$ and $t = 0, 1, \ldots , n -1$, where $n$ is the number of qubits in 
register $C$; see (\ref{eq:prior-F}). The form of $L_h$ is shown in (\ref{eq:matrix_1d})
and is positive definite. It is a Toeplitz matrix and it is known that this type of matrices can be diagonalized via the sine transform $S$ \cite{benedetto97}. We have $L_h=S \Lambda S^\dagger$, where $\Lambda$ is an $M-1\times M-1$ diagonal matrix containing the eigenvalues 
$4 \sin^2 (j\pi /(2M))$, $j=1,\ldots ,M-1$, of $L_h$ and $S = \{S_{i,j}\}_{i,j = 1,2,\ldots ,M-1}$ is the sine transform where $S_{i,j}=\sqrt{\frac{2}{M}}$sin$(\frac{{\pi}ij}{M})$, $i,j$=1,$\dots$,$M-1$. 
Thus
\begin{equation}
\label{eq:SDDelta}
e^{ih^{-2} L_h \gamma } = S e^{i h^{-2} \Lambda \gamma} S.
\end{equation}
The relationship between the sine and cosine transforms and the Fourier transform can be found in
\cite[Thm. 3.10]{wickerhauser94}.

In particular, using the notation in~\cite{wickerhauser94}, we have
\begin{equation} \label{eq:FSC}
T_{M}^\dagger{F_{2M}}T_{M}=C_{M+1}\oplus({-iS_{M-1}}) = \left(\begin{array}{cc}
C_{M+1} & 0 \\
0 & -i S_{M-1}
\end{array} \right),
\end{equation}
where $C_{M+1}$, $S_{M-1}$ denote the cosine and sine transforms, and 
the subscripts $M-1$ and $M+1$ emphasize the size of the respective matrix. $F_{2M}$ is
the $2M \times 2M$ matrix of the Fourier transform.
The matrix $T_M$ has size $2M \times 2M$ and is given by (\ref{eq:TN}).

\begin{equation} \label{eq:TN}
T_{M}=
\left( \begin{array}{cccccccc}
1 &                    &         &                    &   &                    &         &                    \\
  & \frac{1}{\sqrt{2}} &         &                    &   & \frac{i}{\sqrt{2}} &         &                    \\
  &                    & \ddots  &                    &   &                    & \ddots  &                    \\
  &                    &         & \frac{1}{\sqrt{2}} &   &                    &         & \frac{i}{\sqrt{2}} \\
  &                    &         &                    & 1 &                    &         &                    \\
  &                    &         & \frac{1}{\sqrt{2}} &   &                    &         &-\frac{i}{\sqrt{2}} \\
  &                    & \iddots &                    &   &                    & \iddots &                    \\
  & \frac{1}{\sqrt{2}} &         &                    &   &-\frac{i}{\sqrt{2}} &         &      
\end{array} \right)
\end{equation}

The quantum circuits for implementing the unitary transformation $T_M$ is discussed in \cite{roetteler01}. The action of $T_M$ can be described by~\cite{roetteler01}

\begin{equation}\label{eq:TM_action}
\begin{array}{lll}
T_M|0x\rangle & = & \frac{1}{\sqrt{2}}|0x\rangle+\frac{1}{\sqrt{2}}|1x'\rangle \\
T_M|1x\rangle & = & \frac{i}{\sqrt{2}}|0x\rangle-\frac{i}{\sqrt{2}}|1x'\rangle
\end{array}
\end{equation}
where $i^2=-1$, $x$ is an $n$-bit number ranging $1\le{x}<2^n$ and $x'$ denotes its two's complement i.e. $x'=2^n-x$. The basic idea of implementing $T_M$ is to separate its operation into an operator $D$, which ignores the two's complement in $T_M$, and a controlled permutation $\pi$, which transforms the state $|bx\rangle$ to $|bx'\rangle$ only if $b$ is 1. Therefore the action of $D$ and $\pi$ can be written as

\begin{equation}\label{eq:D_action}
\begin{array}{lll}
D|0x\rangle & = & \frac{1}{\sqrt{2}}|0x\rangle+\frac{1}{\sqrt{2}}|1x\rangle \\
D|1x\rangle & = & \frac{i}{\sqrt{2}}|0x\rangle-\frac{i}{\sqrt{2}}|1x\rangle \\[0.2in]
\pi|0x\rangle & = & |0x\rangle \\
\pi|1x\rangle & = & |1x'\rangle
\end{array}
\end{equation}
Clearly, $T_M=D\pi$ and the overall circuit for implementing operation $T_M$ is shown in Figure~\ref{fig:TM}.

By (\ref{eq:FSC}) the sine transform $S$ can be implemented by cascading the quantum circuits in Figure \ref{fig:TM} with the circuit for Fourier transform \cite{nielsen00}. An ancilla bit is added to register $b$. It is kept in the state $|1\rangle$ in order to select the lower-right block

\begin{equation} \label{eq:sine}
{
\left( \begin{array}{cc}
a & 0 \\
0 & -iS_{M-1}
\end{array} \right)
}
\end{equation}
from the unitary operation $T_M^\dagger{F_{2M}}T_M$ (\ref{eq:FSC}), $a\in\mathbb{C}$. Considering
the state $\ket{f_h}$, that corresponds to the right hand side of (\ref{eq:matrix_1d}),
and for $b_i = \Braket{i}{f_h}$ we have 
\begin{equation} \label{eq:b}
{
({0},\underbrace{b_1,b_2,...,b_{M-1}}_{
\begin{array}{c}\scriptsize
\text{\text{ values on the}} \\ [0in]
\scriptsize \text{$(M-1)$ nodes}
\end{array}
})
}
\end{equation}
then the element $a$ in Equation \ref{eq:sine} has no effect, and the circuit in Figure \ref{fig:eAt} is equivalent to applying $(S_{M-1}{e^{ 2 \pi i \Lambda 2^t / E}}S_{M-1})$ onto the $(M-1)$ elements of $\ket{f_h}$. This is also equivalent to simulating the Hamiltonian $e^{2\pi i h^{-2} \Lambda 2^t / E}$ with the state $\ket{f_h}$ stored in register $b$. 

We implement $e^{2\pi i h^{-2} \hat \Lambda 2^t / E}$ where 
$\hat \Lambda = \{\hat \la_j\}_{j = 1,\ldots ,M-1}$ is a diagonal matrix approximating 
$\Lambda = \{\la_j\}_{j=1,\ldots ,M-1}$.

We obtain each $\hat\la_j$, $j=1,\ldots ,M-1$ by the following algorithm. The general idea is to approximate sin$x=\Im(e^{ix})=\Im((e^{ix/r})^r)$ with $W^r$ where $W=1-ix/r+x^2/r^2$ is the Taylor expansion of $e^{ix/r}$ up to the second order term. $W^r$ is computed efficiently in fixed point arithmetic using repeated squaring. The detailed steps are the following:

\vskip 1pc
\noindent{\bf Eigenvalue Simulation Algorithm (ESA):}
\begin{enumerate}
\item Let $r = 2^{\nu + 7}$ where $\nu$ is positive integer which is related to the accuracy of the result.
The inputs and the outputs of the modules below
are $s = \max\{2\nu  +9, 11 + \nu + \log_2 M\}$ bit numbers. Internally the modules may carry 
out calculations in higher precision $O(s)$, but the results are returned using $s$ bits. This 
value of $s$ follows from the error estimates in Proposition \ref{prop:ell}.
\item We perform the transformation 
\[\ket j \ket{0}^{\otimes s}\rightarrow \ket j 
\underbrace{\ket{\hat y_j = \hat x_j  / r}}_{s \,\, \rm qubits},\]
where $\hat x_j$ is the $s$ bit truncation of $x_j = \frac{\pi j}{2M}$. Note that 
$y_j=x_j/r \in (0,1)$ and $\hat y_j$ is the $s$ bit truncation of $y_j$. 
Recall that $r\geq 2$ and $2M$ are powers of $2$. Calculations are to be
performed in fixed precision arithmetic, so division does not need to be
performed actually. All one needs to do is multiply $j$ by $\pi$ 
with $O(s)$ bits of accuracy, keeping in track the position of the 
decimal point and then take the $s$ most significant bits of the result. 
\item We compute the real and imaginary parts of the complex number 
$\hat W_1$ by truncating, if necessary, the respective parts of $\hat W_0 = 1 - \hat y^2 + i \hat y$ 
to $s$ bits of accuracy; see 
(\ref{eqn:Rep_sqr}) in Proposition ~\ref{Prop: Sin-error}. This is expressed by the transformation
\[\ket{\hat y_j} \ket{0}^{\otimes s} \ket{0}^{\otimes s} \rightarrow \ket{\hat y_j} \ket{\Re (\hat W_1)} \ket{\Im (\hat W_1)}.\]
Note that since $\ket{\hat y_j}$ is $s$ qubits long, $\hat W_0$ can be computed exactly using double 
precision and ancilla qubits and the final result can be returned in $s$ qubits.

Complex numbers are implemented using two registers, holding the real and imaginary parts. Complex 
arithmetic is performed by computing the real and imaginary parts of the result.
\item We compute $\hat W_r$ approximating $\hat W_1^r$ 
using repeated squaring. Each step of this procedure is accomplished by
the transformation
\[\ket{\Re (\hat W_{2^j})}\ket{\Im (\hat W_{2^j})} \ket{0}^{\otimes s} \ket{0}^{\otimes s}
\rightarrow \ket{\Re (\hat W_{2^j})}\ket{\Im (\hat W_{2^j})} \ket{\Re (\hat W_{2^{j+1}})}\ket{\Im (\hat W_{2^{j+1}})},\]
which describes the steps in (\ref{eqn:Rep_sqr}). The registers holding real and imaginary 
parts of the numbers are $s$ qubits long.
\item $\Im (\hat W_r)$ approximates $\sin(\pi j/(2M))$ with error $2^{-(\nu - 1)}$. Hence
$\Im^2(\hat W_r)$ approximates the $\sin^2(\pi j/(2M))$.
We compute the square of $\Im (\hat W_r)$ exactly and multiply it by $4M^2$ (this 
involves only shifting). We keep the
$\nu + \log_2 (4M^2)$ most significant bits of the result, which we denote by $\ell_j$. 
This means that the $\log_2 (4M^2)$ bits of the binary string representing
$\ell_j$ compose the integer part and the last $\nu$ bits compose 
the fractional parts of the approximation to $\la_j$. Then 
\[|\la_j - \ell_j| \leq 17 \cdot 2^{-\nu} M^2.\]
For the error estimate details see Proposition \ref{prop:ell}. When $d=1$, $n$ (the
number of qubits in register $C$) and $\nu$ are related by $n = \nu + \log_2 (4M^2)$. 
Moreover, in the one dimensional case
$\hat \la_j = \ell_j$. 
\item Let $k_j$ be the binary string representing $\ell_j$. For a fixed $t$,
we implement the transformation

\begin{equation} \label{eq:transf}
\underbrace{\ket{k_j}}_{n \,\, \rm qubits} \ket{0}^{\otimes n} \rightarrow \ket{k_j} \underbrace{\ket{k_j 2^t}}_{n \,\, \rm qubits}
\end{equation}

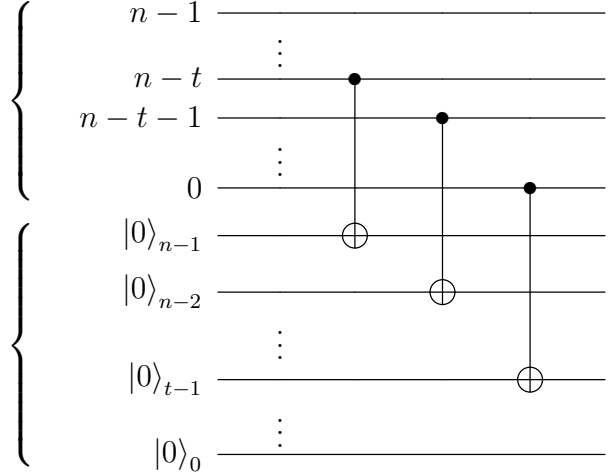
\begin{figure}
\[
\Qcircuit @R=1em @C=2em {
& & & & \lstick{n-1}     	   &   \qw   & \qw      & \qw        & \qw     	& \qw \\
& & & &                  	   & \vdots  & 		    &            & \\
& & & & \lstick{n-t}     	   &   \qw   & \ctrl{4} & \qw        & \qw		& \qw \\
& & & & \lstick{n-t-1}  	   &	\qw	 & \qw	    & \ctrl{4}   & \qw		& \qw \\
& & & &                  	   & \vdots  & 		    & 		     &				  \\
& & & & \lstick{0}       	   & \qw     & \qw	    & \qw		 & \ctrl{4}	& \qw \\
& & & & \lstick{\ket{0}_{n-1}} & \qw	 & \targ	& \qw		 & \qw		& \qw \\
& & & & \lstick{\ket{0}_{n-2}} & \qw	 & \qw	    & \targ	     & \qw		& \qw \\
& & & & 				       & \vdots  & 		    &			 &	\\
& & & & \lstick{\ket{0}_{t-1}} & \qw     & \qw	    & \qw		 & \targ	& \qw \gategroup{1}{2}{6}{2}{.7em}{\{}\\
& & & & 				 	   & \vdots  &		    &			 & 	\\
& & & & \lstick{\ket{0}_0}     &  \qw    & \qw 	    & \qw		 & \qw		& \qw \gategroup{7}{2}{12}{2}{.7em}{\{} \\
}
\]
\caption{Quantum circuit for implementing the transformation in Equation~\ref{eq:transf}.}
\label{fig:transf}
\end{figure}

This is accomplished using CNOTs with the circuit shown in Figure~\ref{fig:transf}, since $t \leq n$ the total number of quantum operations and qubits required to 
implement the circuit for all the values of $t$ is $O(n^2)$.
\item Finally, we use phase kickback (see e.g. \cite{phase-kickback}) 
to obtain $e^{2\pi i \phi_j 2^t}$ from the state $\ket{k_j 2^t}$ where $\phi_j$ is 
the phase corresponding to the eigenvalue $\ell_j$ that approximates $\la_j$; see (\ref{eq:kj}).
\end{enumerate}

\subsection{Multidimensional case}

To implement $e^{-i\Delta_h \gamma}$, $\gamma = 2\pi 2^t/E$, $E$ defined in (\ref{eq:def-E})
and $t = 0, \ldots , n-1$ we use 
\begin{equation}
\label{eq:d-dim-tensor}
e^{-i\Delta_h \gamma}=\underbrace{e^{i h^{-2} L_h \gamma}  \otimes \cdots \otimes e^{i h^{-2} L_h \gamma}}_{d \,\, {\rm matrices}}.
\end{equation}

Therefore the quantum circuit implementing $e^{-i\Delta_h \gamma}$ in $d$ dimensions
is obtained by the replication and parallel application of the circuit 
simulating $e^{i h^{-2} L_h \gamma}$. For example, when $d=2$ we have the circuit in Figure
 \ref{fig:eAt}. The register $B$ of Figure \ref{fig:general_circuit} contains $dm$ qubits, 
 $m = \log_2 M$ and its initial state is assumed
to have the form
\begin{equation} \label{eq:bd}
{
(\underbrace{0,...,0}_{M^d-(M-1)^d},\underbrace{b_1,b_2,...,b_{(M-1)^d}}_{
\begin{array}{c}
\text{\scriptsize \text{ values on the nodes of}} \\ 
\scriptsize \text{$(M-1)^{(\times{d})}$ grid}
\end{array}
})
}
\end{equation}
where $b_i = \Braket{i}{f_h}$. This way we select $S_{M-1}$ block in $T_M^\dagger F_{2M} T_M$ in 
(\ref{eq:FSC}) in each
circuit for $e^{i h^{-2} L_h \gamma}$. Recall that $\ket{f_h}$ corresponds to the right hand side
of (\ref{eq:d-dim-general}). 

The eigenvalues in the
$d$ dimensional case are given as sums of the one dimensional eigenvalues. 
We do not need to form the sums explicitly for the simulation of $-\Delta_h$; they are
computed by the tensor products. The difference between the $d$ 
dimensional and the one dimensional case is that the register $C$ in Figure \ref{fig:general_circuit} 
has $\lceil \log_2 d \rceil$ additional qubits; i.e $n = \lceil \log_2 d \rceil + \log_2 4M^2 + \nu$.
Accordingly, we generate the one dimensional approximations to the eigenvalues using the steps $1-5$
of the eigenvalue estimation algorithm of the previous section. Then we append
$\lceil \log_2 d\rceil$ qubits initialized to
$\ket{0}^{\otimes \lceil \log_2 d \rceil}$ to the left of the register holding the $\ket{\ell_j}$ 
and carry out
the remaining two steps $6-7$ with 
$n = \lceil \log_2 d \rceil + \log_2 4M^2 + \nu$. The error in the approximate eigenvalues is equal to
$17 M^2 d/ 2^\nu$; see Theorem \ref{th:d-error-eig}.

\subsection{Simulation cost}

Simulating the sine and cosine transforms (\ref{eq:FSC}) 
requires $O(m^2)$, $m = \log_2 M$ quantum operations and
$O(m)$ qubits \cite{roetteler01}. The diagonal eigenvalue matrix of the one dimensional case
(\ref{eq:SDDelta}) is simulated by
ESA. Its steps $1-3$ and step $5$ require $O(s^2)$ quantum operations
and $O(s)$ qubits. In step $4$ repeated squaring is performed $\nu + 7$ times. Each
repetition or step of the procedure requires $O(s^2)$ quantum operations and 
$O(s)$ qubits. The total cost of step $4$ is proportional to $\nu \cdot O(s^2)$ quantum
operations and $\nu \cdot O(s)$ qubits, accounting for any ancilla qubits used in 
repeated squaring.
Step $6$ requires $O(n+t)$ quantum operations and qubits for fixed $t$. Step~$7$ 
requires $O(n^2)$ quantum operations, due to the Fourier transform, and $O(n)$ qubits.

Using Theorem \ref{th:d-error-eig}, and requiring 
error $\varepsilon$ in the approximation of the eigenvalues,
we have  
\[\frac{17E}{2^\nu}\leq \varepsilon.\]

\[\nu = \lceil\log_2 \frac{17 E}{\varepsilon}  \rceil, \]
i.e. $\nu = \Theta(\log_2 d + m + \log_2 \varepsilon^{-1})$. We also have $n = \Theta (\nu)$ and $s = \Theta (n)$.

We derive the simulation cost taking the following facts into account:
\begin{itemize}
\item
Steps $1-5$ deal with the approximation of the eigenvalues. These computations are not repeated for every $t=0,\dots,n-1$.
The total cost of these steps is $O(n^3)$ quantum operations and $O(n^2)$ qubits.
\item
The total cost of step $6$, resulting from all the values of $t$, is $O(n^2)$ quantum operations and qubits.
\item
The total cost of step $7$, that applies phase kickback for all values of $t$, does not exceed $O(n^3)$ quantum operations and
$O(n^2)$ qubits.
\end{itemize}
Therefore the total cost
to simulate $e^{ih^{-2} L_h \gamma}$, $\gamma = 2\pi 2^t / E$, for all $t=0,\dots,n-1$, is $O(n^3)$ 
quantum operations and $O(n^2)$ qubits. From (\ref{eq:-Delta-sim}) we conclude that the
cost to simulate Poisson's matrix for the $d$ dimensional problem is $d\cdot O(n^3)$ 
quantum operations and $d\cdot O(n^2)$ qubits.

Finally, we remark that the dominant component of the cost is the one resulting from the approximation of
the eigenvalues (i.e., the cost of steps $1-5$). 

\section{Total cost} \label{sec:TotCost}

We now consider the total cost for solving the Poisson equation 
(\ref{eq:poisson}). Discretizing the second derivative operator
on a grid with mesh size $h = 1/M$ results to a system of linear equations, where the coefficient 
matrix is $(M-1)^d \times (M-1)^d$, i.e. exponential in the dimension $d \geq 1$. Solving this
system using classical algorithms has cost that grows at least as fast as the number of unknowns
$(M-1)^d$. For the case $d = 2$, \cite[Table 6.1]{demmel97} summarizes the cost of direct and 
iterative classical algorithms solving this system.

For simulating Poisson's matrix
we need $d \cdot O(n^3)$ quantum operations and $d \cdot O(n^2)$ qubits, where $n = O(\log_2 d + m + \log_2 \varepsilon^{-1})$
and $m = \log_2 M$. To this we add the cost for computing the reciprocal of the eigenvalues 
which is $O((\log_2 \varepsilon_0^{-1})^2 \, \log_2\log_2 \varepsilon_0^{-1})$ quantum operations
and $O((\log_2 \varepsilon_0^{-1}) \log_2 \log_2 \varepsilon_0^{-1})$ qubits,
accounting for the $O(\log_2 \log_2 \varepsilon_0^{-1})$ Newton steps, $\varepsilon_0 = \min \{\varepsilon, E^{-1}\}$. 
Finally, we add the cost of the conditional rotation which is
proportional to
$(\log_2 \varepsilon_1^{-1})^4$ quantum operations and $(\log_2 \varepsilon_1^{-1})^3$ qubits, $\varepsilon_1=\min\{ 1/(4M)^2,\varepsilon\}$.

From the above we conclude that
the quantum circuit implementing the algorithm requires of order $d \cdot O(n^3)+ (\log_2 \varepsilon_1^{-1})^4$ quantum operations
and $d \cdot O(n^2) + (\log_2 \varepsilon_1^{-1})^3$ qubits.

The relation between the matrix size and the accuracy is very important 
in assessing the performance of the quantum algorithm solving a linear system, since 
its cost depends on both of these quantities \cite{lloyd09}.
In particular, for the Poisson equation we have ignored, so far, the effect of the
discretization error of the Laplacian $\Delta$. If the grid 
is too coarse the discretization error will exceed the desired accuracy. If the
grid is too fine, the matrix will be unnecessarily large. Thus the mesh size
and, therefore, the matrix size should depend on $\varepsilon$, i.e. $M=M (\varepsilon)$. This
dependence is determined by the smoothness of the solution $u$, which, in turn, 
depends on the smoothness of the right hand side function $f$. For example,
if $f$ has uniformly bounded partial derivatives up to order four, then the 
discretization error is $O(h^2)$ and we set $M = \varepsilon^{-1/2}$; see \cite{demmel97,forth04}
for details.
In general, we have $M = \varepsilon^{-\alpha}$, where $\alpha > 0$ is a parameter
depending on the smoothness of the solution. This yields $n = O(\log_2 d + \log_2 \varepsilon^{-1})$, since
$m = \log_2 M = \alpha \log_2  \varepsilon^{-1}$. The resulting number of the 
quantum operations for the circuit is proportional to 
\[\max\{d,\log_2\varepsilon^{-1}\} (\log_2 d + \log_2 \varepsilon^{-1})^3,\]
and the number of qubits is proportional to 
\[\max\{d,\log_2\varepsilon^{-1}\} (\log_2 d + \log_2 \varepsilon^{-1})^2,\]
It can be shown that
$\log_2 d = O( \log_2 \varepsilon^{-1})$  and the number of quantum operations and qubits become proportional to 
\[\max\{d,\log_2\varepsilon^{-1}\} (\log_2 \varepsilon^{-1})^3,\]
and
\[\max\{d,\log_2\varepsilon^{-1}\} (\log_2 \varepsilon^{-1})^2,\]
respectively.

Observe that the condition number of the matrix is proportional to $\varepsilon^{-2\alpha}$ and is
independent of $d$. Therefore a number of repetitions proportional to $\varepsilon^{-4 \alpha}$
leads to 
a success probability arbitrarily close to one, regardless of the 
value of $d$. This follows 
because repeating an algorithms many times increases its probability to succeed 
at least according to the Chernoff bounds \cite[Box 3.4, pg. 154]{nielsen00}.
In contrast to this, the cost of any deterministic classical algorithm solving 
the Poisson equation is exponential in $d$. Indeed, for error $\varepsilon$ the cost is bounded from
below by a quantity proportional 
to $\varepsilon^{-d/r}$ where $r$ is a smoothness parameter \cite{wer91}.

\section{Conclusion and future directions} \label{sec:Concl}

We present a quantum algorithm and a circuit for approximating the solution of the Poisson equation in $d$ dimensions.
The algorithm breaks the curse of dimensionality and in terms of $d$
yields an exponential speedup relative to classical algorithms. The quantum circuit is scalable
and has been obtained by exploiting the structure of the Hamiltonian for the Poisson equation
to diagonalize it efficiently.
In addition, we provide quantum circuit modules for computing the reciprocal of eigenvalues
and trigonometric approximations. These modules can be used 
in other problems as well. 

The successful development of the quantum Poisson solver opens up entirely new 
horizons in solving structured systems on quantum computers, such as those 
involving Toeplitz matrices. Hamiltonian simulation 
techniques \cite{berry07,zhang10,CW12} can also be combined with our 
algorithm to extend its applicability to PDEs, signal processing, time series
analysis and other areas. 

\section*{Acknowledgements} \label{sec:Ackn}
Sabre Kais and Yudong Cao would like to thank 
NSF CCI center, ``Quantum Information for Quantum Chemistry (QIQC)", Award number CHE-1037992 and 
Army Research Office (ARO) for partial support.

Anargyros Papageorgiou, Iasonas Petras and Joseph F. Traub thank the NSF for financial support.

\section*{Appendix 1}
In this paper, $X$, $Y$ and $Z$ are Pauli matrices $\sigma_x$, $\sigma_y$ and $\sigma_z$. $I$ represents identity matrix. $H$ is the Hadamard gate and $W$ in Figure \ref{fig:general_circuit} represents $H^{\otimes{n}}$ where $n$ is the number of qubits in the register. The matrix representations of other quantum gates used are the following:

\begin{equation} \label{eq:appendix_gates2}
%I=\bigg(\begin{matrix} 1 & 0 \\ 0 & 1 \end{matrix}\bigg),\quad
V^\dagger=\frac{1}{2}
\left( \begin{array}{cc} 1-i & 1+i \\ 1+i & 1-i \end{array} \right)
,\quad
R_{zz}(\theta)=e^{i\theta}
\left( \begin{array}{cc} 1 & 0 \\ 0 & 1 \end{array} \right)
 \\
\end{equation}

\begin{equation} \label{eq:appendix_gates3}
R_x(\theta)=
\left( \begin{array}{cc} 
\text{cos}(\frac{\theta}{2}) & i\text{sin}(\frac{\theta}{2}) \\ 
i\text{sin}(\frac{\theta}{2}) & \text{cos}(\frac{\theta}{2})
\end{array} \right)
,\quad
R_y(\theta)=
\left( \begin{array}{cc} 
\text{cos}(\frac{\theta}{2}) & \text{-sin}(\frac{\theta}{2}) \\ 
\text{sin}(\frac{\theta}{2}) & \text{cos}(\frac{\theta}{2})
\end{array} \right)
\end{equation}

\begin{equation} \label{eq:appendix_gates4}
S=
\left( \begin{array}{cc} 
1 & 0 \\ 
0 & i
\end{array} \right)
,\quad
T=
\left( \begin{array}{cc} 
1 & 0 \\ 
0 & e^{i\frac{\pi}{4}}
\end{array} \right)
,\quad
R_{z}(\theta)=
\left( \begin{array}{cc} 1 & 0 \\ 0 & e^{i\theta} \end{array} \right)
\end{equation}

\section*{Appendix 2}

\begin{theorem} 
\label{th:Newton_error}
Consider the approximation $\hat x_s$ to $v^{-1}$, $v>1$, using $s$ steps of
Newton iteration, with initial approximation $\hat x_0 = 2^{-p}$, $2^{p-1} < v \leq 2^p$.
Assume that each step takes as inputs $b$ bit numbers and produces 
$b$ bit outputs and that all internal calculations are carried out in 
fixed precision arithmetic.Then the error is 
\[|\hat x_s - v^{-1}| \leq \varepsilon_N + s 2^{-b},\]
where $\varepsilon_N$ denotes the desired error of Newton iteration without considering 
the truncation error, $\varepsilon_N \geq 2^{-2^s}$. The truncation error is given by the second term 
and $s \geq \lceil \log_2\log_2 \varepsilon_N^{-1}\rceil$, $b > p$.
\end{theorem}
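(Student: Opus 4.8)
The plan is to split the total error into two independent contributions---the error incurred by exact (untruncated) Newton iteration, and the error that accumulates from rounding the output of each step to $b$ bits---and to add them by the triangle inequality. I would begin by recording the one identity that drives everything: for the Newton map $g(x)=2x-vx^2$ one has $1-vg(x)=(1-vx)^2$, so in terms of the residual $e(x):=1-vx$ a single step acts as $e\mapsto e^2$, and $g(x)-v^{-1}=-v(x-v^{-1})^2$. Since $\hat x_0=2^{-p}$ with $2^{p-1}<v\le 2^p$, the initial residual satisfies $e_0:=1-v\hat x_0\in[0,1/2)$. Introduce the idealized sequence $x_0^\ast=\hat x_0$, $x_{i+1}^\ast=g(x_i^\ast)$; then $1-vx_i^\ast=e_0^{2^i}$, hence $|x_s^\ast-v^{-1}|=e_0^{2^s}/v\le 2^{-2^s}\le\varepsilon_N$, using $v>1$ and the hypothesis $\varepsilon_N\ge 2^{-2^s}$.

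Next I would control how the $b$-bit truncations perturb this. Because all iterates stay below $1$, truncating to $b$ bits costs at most $2^{-b}$, so $\hat x_{i+1}=g(\hat x_i)+\delta_{i+1}$ with $|\delta_{i+1}|\le 2^{-b}$. Setting $\Delta_i=\hat x_i-x_i^\ast$ gives $\Delta_0=0$ and $\Delta_{i+1}=g'(\xi_i)\,\Delta_i+\delta_{i+1}$ for some $\xi_i$ between $\hat x_i$ and $x_i^\ast$, where $g'(x)=2(1-vx)$. The crucial step is an invariant: one shows inductively, from $1-v\hat x_{i+1}=(1-v\hat x_i)^2-v\delta_{i+1}$ and the fact that $b>p$ makes $v\,2^{-b}<2^{p-b}$ small, that every $\hat x_i$ (and hence every $\xi_i$) remains in a fixed neighborhood of $v^{-1}$ on which $|g'|\le 1$. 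Unrolling the recursion then yields $|\Delta_s|\le\sum_{i=1}^{s}|\delta_i|\le s\,2^{-b}$, and therefore $|\hat x_s-v^{-1}|\le|\Delta_s|+|x_s^\ast-v^{-1}|\le s\,2^{-b}+\varepsilon_N$, which is the claim.

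The main obstacle is precisely the invariant in the second paragraph: one must verify that rounding never kicks an iterate out of the quadratic-convergence basin of $v^{-1}$---the region where $|g'|\le 1$---since a step taken outside that region could amplify rather than damp the accumulated error. This is where the hypothesis $b>p$ (which keeps $v\,2^{-b}$ under control relative to the residual) is used, and it is the only part of the argument that is not a routine quadratic-convergence estimate together with a geometric-series accounting of the per-step rounding errors.
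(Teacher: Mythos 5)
Your proposal is correct and follows essentially the same route as the paper's proof: a quadratic residual recursion $e\mapsto e^2$ (equivalently $e_{s+1}=ve_s^2$) for the exact iteration giving the $2^{-2^s}\le\varepsilon_N$ term, plus a mean-value/derivative bound $|\varphi'(x)|=2|1-vx|\le 1$ showing the per-step truncation errors merely accumulate additively to $s\,2^{-b}$. The only difference is that you spell out the basin invariant (that the rounded iterates stay where $|g'|\le 1$) more explicitly than the paper, which simply asserts the derivative bound; this is a point in your favor rather than a divergence.
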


\begin{proof}
Consider the function $g(x) = 1/x - v$, $x>0$, where $g(1/v)=0$.
The Newton iteration for approximating the zero of $g$ is given by
\[x_{s+1} = \varphi(x_s)=2x_s - vx_s^2  \quad s=0,1,\dots .\]
The error $e_s=|x_s-1/v|$ satisfies $e_{s+1}= v e_s^2$.
Unfolding the recurrence we get
\[e_s \le (v e_0)^{2^s}.\]
Let $x_0=2^{-p}$. Now consider the least power of two that is greater than or equal to $v$, i.e., $2^{p-1}< v\le 2^p$. Clearly $p>1$ since $v>1$
and $ve_0 <1/2$. For error $\varepsilon_N$ we have $2^{-2^s} \le \varepsilon_N$, which implies $s\ge \lceil \log_2 \log_2 \varepsilon^{-1}_N \rceil$.

The derivative of the iteration function is decreasing and we have $|\varphi^\prime| \le 2(1-ax_0)\le 1$.
We will implement the iteration using fixed precision arithmetic. We first calculate the round off error.
We have 
\begin{eqnarray*}
\hat x_0 &=& x_0\\
\hat x_1 &=& \varphi(\hat x_0) + \xi_1 \\
\hat x_2 &=& \varphi(\hat x_1) + \xi_2 \\
&&\vdots \\
\hat x_s &=& \varphi(\hat x_{s-1}) + \xi_s,
\end{eqnarray*}
where the $\xi_i$ denotes truncation error at the respective steps.
Thus
\[\hat x_s - x_s = \varphi( \hat x_{s-1}) + \xi_s - \varphi(x_{s-1}),\]
and using the fact $|\varphi^\prime| \le 1$ we obtain
\[
|\hat x_s - x_s| \le | \hat x_{s-1} -x_{s-1} | + |\xi_s|  \le \sum_{i=1}^s |\xi_i|\le s 2^{-b},
\]
assuming that we truncate the intermediate results to $b$ bits of accuracy.
\end{proof}

\begin{lem}
\label{lem1:Sin_error}
Let $x \in [\pi/(2M),\pi/2)$ and $W = 1+i \frac{x}{r} - \frac{x^2}{r^2}$. Then 
\[ \left| e^{ix} - W^r \right| \leq 2^7/r.\]
\end{lem}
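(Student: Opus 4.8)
The plan is to estimate $|e^{ix} - W^r|$ by first comparing the single-step quantities $e^{ix/r}$ and $W = 1 + i(x/r) - (x/r)^2$, and then amplifying this bound to the $r$-th power via the standard telescoping identity for differences of powers. Write $z = x/r$ and note that, since $x \in [\pi/(2M), \pi/2)$ and $r = 2^{\nu+7} \geq 2^7$, we have $z \in (0, \pi/(2r)] \subseteq (0, \pi/256]$, so $z$ is small; in particular $|z| \le \pi/256 < 1/2$.

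First I would bound the one-step error $|e^{iz} - W|$. Since $W$ is the degree-two Taylor polynomial of $e^{iz}$ at $0$ (because $e^{iz} = 1 + iz - z^2/2 - \cdots$, and $W = 1 + iz - z^2$ differs already in the quadratic coefficient — so more carefully, $W$ agrees with $e^{iz}$ only to first order, and one should compare $|e^{iz} - W| \le |e^{iz} - (1+iz-z^2/2)| + |{-z^2/2} + z^2| = |e^{iz} - (1+iz - z^2/2)| + z^2/2$). The Taylor remainder for the quadratic polynomial of $e^{iz}$ is at most $|z|^3/6$, and $z^2/2$ dominates for small $z$; so $|e^{iz} - W| \le z^2/2 + |z|^3/6 \le z^2$ comfortably, using $|z| \le 1/2$. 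Thus $|e^{iz} - W| \le (x/r)^2 \le (\pi/2)^2 / r^2 \le 3/r^2$.

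Next I would use the factorization $a^r - b^r = (a-b)\sum_{k=0}^{r-1} a^k b^{r-1-k}$ with $a = e^{iz}$, $b = W$. We have $|a| = |e^{iz}| = 1$, and I need a bound $|W| \le 1$ or at least $|W| \le$ something very close to $1$: compute $|W|^2 = (1-z^2)^2 + z^2 = 1 - z^2 + z^4 \le 1$ for $z^2 \le 1$, so in fact $|W| \le 1$. Therefore each summand $|a^k b^{r-1-k}| \le 1$, the sum of $r$ terms is at most $r$, and we get
\[
|e^{ix} - W^r| = |a^r - b^r| \le r \cdot |e^{iz} - W| \le r \cdot \frac{3}{r^2} = \frac{3}{r} \le \frac{2^7}{r}.
\]

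The main obstacle — really the only delicate point — is getting the constant in the one-step Taylor bound small enough and, relatedly, verifying $|W| \le 1$ so that the geometric-sum factor is genuinely linear in $r$ rather than exponentially large; if $|W|$ exceeded $1$ even slightly, the bound $\sum |a^kb^{r-1-k}| \le r$ would fail and one would pick up a factor like $|W|^{r}$ which could blow up. Fortunately the exact identity $|W|^2 = 1 - z^2 + z^4$ makes this clean for all relevant $z$. The stated constant $2^7$ is very generous compared to the $3$ (or even $\pi^2/4 + $ lower-order) that the argument actually yields, so there is ample slack and I would not optimize it.
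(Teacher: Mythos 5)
Your proof is correct, and it takes a genuinely different route from the paper's. The paper expands $e^{ix} = (W + E(x/r))^r$ via the binomial theorem, bounds the tail $E$ by $|x/r|^3$, controls the powers of $W$ by the crude estimate $|W|^k \le (1 + x/r + x^2/r^2)^k \le e^{\pi}$, and then sums the resulting error series term by term. You instead use the telescoping factorization $a^r - b^r = (a-b)\sum_{k=0}^{r-1} a^k b^{r-1-k}$ together with the exact identity $|W|^2 = 1 - z^2 + z^4 \le 1$, so the amplification factor is exactly $r$ and no exponential constant $e^{\pi}$ ever appears. Your route is more elementary and yields a constant of about $\pi^2/4$ rather than $2^7$. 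It also handles, correctly and explicitly, a point the paper's proof glosses over: since $W = 1 + iz - z^2$ has quadratic coefficient $-1$ rather than the Taylor value $-1/2$, the one-step error $e^{iz} - W$ is genuinely of order $z^2$ (it equals $z^2/2$ plus an $O(z^3)$ tail), whereas the paper's decomposition $e^{iz} = W + E(z)$ with $E(z) = \sum_{k\ge 3}(iz)^k/k!$ is only consistent with $W = 1 + iz - z^2/2$; your $O(z^2)$ one-step bound, amplified linearly, still gives $x^2/r \le 2^7/r$ with ample slack. One small caveat shared with the paper: the lemma as stated never constrains $r$, and both arguments need $x/r$ small (you take $r = 2^{\nu+7}$ from the surrounding context), so your implicit assumption is no stronger than the paper's own.
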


\begin{proof} $e^{ix} = \left( e^{ix/r}\right)^r = \left( W+E(x/r)\right)^r$, where for
$y = x/r$, $E(y) = \sum_{k \ge{3}} \frac{(iy)^k}{k!} $ and
\begin{equation} \label{eq:Ey}
\begin{tabular}{l}
$\displaystyle
\left|\sum_{k\ge{3}}{\frac{(iy)^k}{k!}}\right|\le\sum_{k\ge{3}}{\frac{|y|^k}{k!}}=|y|^3\sum_{k\ge{3}}^{}{\frac{|y|^{k-3}}{k!}}
=|y|^3\sum_{k\ge{0}}{
\underbrace{\frac{k!}{(k+3)!}}_{\text{$\frac{1}{(k+1)(k+2)(k+3)}\le\frac{1}{6}$}}
\frac{|y|^k}{k!}}$ \\
$\displaystyle
\le\frac{|y|^3}{6}e^{|y|}<|y|^3$
\end{tabular}
\end{equation}
where the last inequality holds for $|y|=|\frac{x}{r}|<1$, which is true due to our assumptions. Hence $\left|E(\frac{x}{r})\right|\le|\frac{x}{r}|^3$ for $|x|<r$. 

We then turn our attention to the powers of $W$.
\begin{equation} \label{eq:W}
\begin{tabular}{c}
$\displaystyle
|W|=|1+i\frac{x}{r}-\frac{x^2}{r^2}|\le{1+\frac{x}{r}+\frac{x^2}{r^2}}$ \\
\end{tabular}
\end{equation}
For all $k\in\{1,2,...,r\}$ we have,
\begin{equation} \label{eq:Weq}
\begin{tabular}{c}
$\displaystyle
|W|^k\le \left(1+\frac{x}{r}+\frac{x^2}{r^2} \right)^k\le{e^{\left(\frac{x}{r}+\frac{x^2}{r^2}\right)k}}
=e^{\frac{|x|}{r}k}e^{\frac{|x|^2}{r^2}k}\le{e^{|x|}}e^{\frac{|x|^2}{r}} \leq e^{2x} \leq e^\pi$.
\end{tabular}
\end{equation}
where we have used the fact that $\frac{k}{r}<1$. The second inequality is due to $(1+a)^k\le{e^{ka}}$,
$a\in\mathbb{R}$, $k\in\mathbb{Z}^+$. Indeed

\begin{equation} \label{eq:akineq}
\begin{tabular}{c}
$\displaystyle
(1+a)^k=\sum_{l=0}^k
\left( \begin{array}{c} k \\ l \end{array} \right)
% \begin{pmatrix} k \\ l \end{pmatrix}
a^{k-l}=
\sum_{l=0}^k\frac{k!}{l!(k-l)!}a^{k-l}=\sum_{l=0}^k\frac{k!}{l!(k-l)!}\frac{(ka)^{k-l}}{k^{k-l}}=$ \\
$\displaystyle
\sum_{l=0}^k\underbrace{\frac{k(k-1)\cdots (l+1)}{k^{k-l}}}_{\text{$\le{1}$}}
\underbrace{\frac{l\cdots 1}{l!}}_{\text{$\le{1}$}}
\frac{(ka)^{k-l}}{(k-l)!}\le\sum_{l=0}^k\frac{(ka)^{k-l}}{(k-l)!}
=\sum_{l=0}^k\frac{(ka)^l}{l!}\le{e^{ka}}$
\end{tabular}
\end{equation}
Finally we look at the approximation error. Note that

\begin{equation} \label{eq:approx_err}
\begin{array}{ccl}
e^{ix} &=& \displaystyle\left(W+E\left(\frac{x}{r}\right)\right)^r=\sum_{k=0}^{r}
\left( \begin{array}{c} r \\ k \end{array} \right)
W^k\left[E\left(\frac{x}{r}\right)\right]^{r-k} \nonumber \\
&=& W^r+  \underbrace{
\left( \begin{array}{c} r \\ l \end{array} \right)
W^{r-1}E\left(\frac{x}{r}\right)+ \ldots +
\left( \begin{array}{c} r \\ r \end{array} \right)
W^{0}\left[E\left(\frac{x}{r}\right)\right]^r}_{\text{error in $r$-th power}}
\end{array}
\end{equation}

Consider the $k$-th term in the error series. According to (\ref{eq:Ey}) we have
\begin{eqnarray*} \label{eq:error_term}
\left( \begin{array}{c} r \\ k \end{array} \right)
|W|^{r-k} \left|E\left(\frac{x}{r}\right)\right|^{k} &\le & C  
\left( \begin{array}{c} r \\ k \end{array} \right)
|\frac{x}{r}|^{3k}=C\frac{r!}{k!(r-k)!}\frac{|x|^{3k}}{r^{3k}}\\
& = &C\frac{r(r-1)\cdots (r-k+1)}{k!}\frac{1}{r^k}\frac{|x|^{3k}}{r^{2k}} \\ 
&\le & 
C\frac{|x|^k}{k!}\frac{|x|^{2k}}{r^{2k}}\le \frac{\pi}{2} C \left(\frac{|x|}{r}\right)^{2k} \leq \frac{\pi}{2} e^\pi \left(\frac{|x|}{r}\right)^{2k}, 
\end{eqnarray*}
where $C= e^\pi$ and we use Stirling's formula $k ! = \sqrt{2\pi} k^{k+1/2} \exp \left( -k + \frac{\theta}{12 k }\right)$, 
$\theta \in (0,1)$, \cite[p. 257]{abra} to obtain $|x|^k / k! \leq 5^{-k} x^k e^k \leq 1$ for $k \geq 5$, since
$|x|\le\frac{\pi}{2}$. So the total approximation error is bounded by
\begin{equation} \label{eq:error_bound}
|e^{ix}-W^r|\le\sum_{k=1}^r
\left(
\begin{array}{c} r \\ k \end{array}
\right)
|W|^{r-k}\left|\frac{x}{r}\right|^{3k}\le
\frac{\pi}{2} e^\pi r(\frac{|x|}{r})^2\le e^\pi \left(\frac{\pi}{2}\right)^3
\frac{1}{r} \leq 2^7 \cdot \frac{1}{r}
\end{equation}
\end{proof}

\begin{lem}
\label{lem:cos-sin} 
Under the assumptions of Lemma \ref{lem1:Sin_error}
\[\left|\sin x - \Im \left(W^r \right) \right|\leq 2^7 / r\]
and \[\left| \cos x - \Re \left(W^r\right) \right| \leq 2^7 / r.\]
\end{lem}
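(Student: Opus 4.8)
The plan is to derive both inequalities as immediate corollaries of Lemma \ref{lem1:Sin_error}, using only the elementary fact that passing to the real or imaginary part of a complex number cannot increase its modulus.

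First I would record the identities $\sin x = \Im\!\left(e^{ix}\right)$ and $\cos x = \Re\!\left(e^{ix}\right)$, which hold for all real $x$. Next I would observe that $\Im$ and $\Re$ are $\reals$-linear, so that $\sin x - \Im\!\left(W^r\right) = \Im\!\left(e^{ix} - W^r\right)$ and $\cos x - \Re\!\left(W^r\right) = \Re\!\left(e^{ix} - W^r\right)$. Then, since for any $z \in \complex$ we have $|\Im z| \le |z|$ and $|\Re z| \le |z|$, it follows that
\[
\left|\sin x - \Im\!\left(W^r\right)\right| \le \left|e^{ix} - W^r\right|
\quad\text{and}\quad
\left|\cos x - \Re\!\left(W^r\right)\right| \le \left|e^{ix} - W^r\right|.
\]
Finally I would invoke Lemma \ref{lem1:Sin_error}, which gives $\left|e^{ix} - W^r\right| \le 2^7/r$ under the stated hypotheses $x \in [\pi/(2M),\pi/2)$ and $W = 1 + i\frac{x}{r} - \frac{x^2}{r^2}$, to conclude both bounds.

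There is essentially no obstacle here: the content is entirely contained in Lemma \ref{lem1:Sin_error}, and this lemma just extracts the component-wise error estimates that are actually used in the Eigenvalue Simulation Algorithm (where $\Im(W^r)$ approximates $\sin(\pi j/(2M))$) and in the controlled-rotation bisection routine. The only point worth stating explicitly is that the hypotheses carry over verbatim, so no new range restriction on $x$ or $r$ is needed.
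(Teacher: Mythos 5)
Your proof is correct and is exactly the argument the paper has in mind when it says the proof is trivial and omits it: $|\Im z|,|\Re z|\le|z|$ applied to $e^{ix}-W^r$, followed by Lemma \ref{lem1:Sin_error}. No issues.
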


The proof is trivial and we omit it.

\begin{prop}
\label{Prop: Sin-error}
Let $r = 2^{\nu +7}$ for $\nu \geq 1$ and consider the procedure computing
$W^r$, as defined in Lemma \ref{lem1:Sin_error} using repeated squaring. Assume each 
step computing a square carries out the calculation using fixed precision arithmetic
and that its inputs and outputs are $s$ bit numbers. Let $\hat W_r$ be the
final result. Then the error is
\[\left| W^r - \hat W_r \right| \leq \frac{2^{\nu +9}}{2^s},\]
for $s \geq 11 + \nu + \log_2 M$, where $1/M$ is the mesh size in the discretization of the
Poisson equation.
\end{prop}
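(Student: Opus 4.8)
\noindent\emph{Proof strategy.} The plan is to track how the truncation errors committed in the $\nu+7$ squaring steps (recall $r=2^{\nu+7}$ is a power of two) propagate to the final result. Write $W_{2^j}:=W^{2^j}$ for the exact powers and $\hat W_{2^j}$ for the numbers actually produced, so that $\hat W_{2^{j+1}}=\hat W_{2^j}^2+\eta_{j+1}$, where $\eta_{j+1}$ is the error incurred by truncating the real and imaginary parts of $\hat W_{2^j}^2$ to $s$ bits; hence $|\eta_{j+1}|\le\sqrt 2\,2^{-s}$. Put $\delta_j:=|W_{2^j}-\hat W_{2^j}|$; the quantity to bound is $\delta_{\nu+7}$. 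Using the factorization $\hat W_{2^j}^2-W_{2^j}^2=(\hat W_{2^j}-W_{2^j})(\hat W_{2^j}+W_{2^j})$ one obtains the recurrence
\[\delta_{j+1}\le\bigl(|\hat W_{2^j}|+|W_{2^j}|\bigr)\delta_j+|\eta_{j+1}|,\qquad \delta_0=|W-\hat W_1|=O(2^{-s}),\]
the base case being the truncation of $\hat W_0=1-\hat y^2+i\hat y$ to $s$ bits (plus, if one insists on the exact argument, the $O(2^{-s})$ error $|\hat y-x/r|$).

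The crucial point is that $W$ is a contraction. From $W=1+iy-y^2$ with $y=x/r\in(0,1)$ --- which holds because $x<\pi/2<2\le r$ as $\nu\ge1$ --- one computes $|W|^2=(1-y^2)^2+y^2=1-y^2(1-y^2)<1$. Hence $|W_{2^j}|=|W|^{2^j}\le1$ for every $j$, and the same elementary estimate gives $|\hat W_1|\le|\hat W_0|<1$ (its coordinates $1-\hat y^2$ and $\hat y$ lie in $(0,1)$), after which each squaring can increase the modulus only by a truncation error of size $\le\sqrt 2\,2^{-s}$, so $|\hat W_{2^j}|\le1+O\!\bigl((\nu+7)2^{-s}\bigr)$ throughout. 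Plugging these bounds into the recurrence turns it into
\[\delta_{j+1}\le\bigl(2+o(1)\bigr)\delta_j+\sqrt 2\,2^{-s},\]
and an induction on $j$ (with the self-consistent hypothesis $\delta_j\le1$, which is valid once $s$ is as large as assumed) shows that the quadratic feedback coming from the $\delta_j^2$ terms is negligible, so the amplification per step is essentially $2$ rather than, say, $3$.

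It then remains to unfold this linear recurrence over the $\nu+7$ steps, which yields $\delta_{\nu+7}\le 2^{\nu+7}\delta_0+\sqrt 2\,2^{-s}\sum_{i=0}^{\nu+6}2^{i}\bigl(1+o(1)\bigr)$, and absorbing the $O(1)$ constants gives $\delta_{\nu+7}\le 2^{\nu+9}\,2^{-s}$, the claimed bound. The hypothesis $s\ge 11+\nu+\log_2 M$ enters exactly here: for the smallest argument $x=\pi/(2M)$ the input to the squaring is $y=\pi/(2Mr)$, and one needs $\hat y$ to be a faithful --- in particular nonzero, carrying several significant bits --- $s$-bit truncation of $y$, i.e. $2^{-s}$ must be much smaller than $y$, which forces $s$ to be at least about $\log_2(Mr)=\log_2 M+\nu+7$; the additional bits supply the slack that makes the residual terms genuinely $o(1)$.

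I expect the delicate part to be the middle step: making rigorous that the \emph{computed} powers $\hat W_{2^j}$ stay bounded in modulus by essentially $1$, so that the per-step amplification is $2$ and not $3$ --- a factor $3^{\nu+7}$ would overshoot $2^{\nu+9}$ by far. This rests on the clean identity $|W|^2=1-y^2(1-y^2)$ together with the observation that truncating a complex number whose real and imaginary parts lie in $(-1,1)$ cannot push its modulus appreciably above $1$, and on bookkeeping the base-case error $\delta_0$ tightly enough that the leading term $2^{\nu+7}\delta_0$ does not already exceed the target.
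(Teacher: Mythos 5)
Your proposal is correct and follows essentially the same route as the paper: bound the base-case truncation error $|W-\hat W_0|=O(2^{-s})$, show the computed iterates remain in (essentially) the unit disk so that squaring amplifies errors by a factor of at most $2$ per step, and unfold the resulting linear recurrence into $2^{\nu+7}\delta_0$ plus a geometric sum of the $\sqrt{2}\,2^{-s}$ truncation terms. The only cosmetic difference is that you obtain the Lipschitz constant $2$ from the factorization $z^2-w^2=(z-w)(z+w)$, whereas the paper bounds the Jacobian of $(a,b)\mapsto(a^2-b^2,2ab)$ on the unit disk, and the paper closes the disk-invariance step more cleanly by noting that truncation of real and imaginary parts can only decrease the modulus, so $|\hat W_{2^j}|<1$ exactly rather than up to $O((\nu+7)2^{-s})$.
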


\begin{proof}
We are interested in estimating $\sin (j\pi/(2M))$, for $j= 1,2,\ldots ,M-1$. We consider $x \in [\pi/(2M) , \pi/2)$. We
approximate $e^{ix}$ and from this $\sin x$, which is the imaginary part of $e^{ix}$. Let $y = \frac{x}{r}\leq 2^{-7}$. We truncate it 
to $s$ bits of accuracy to obtain $\hat y$. Note that $W = 1- y^2 + i y$ satisfies $|W|^2 = 1 - y^2 + y^4 < 1$. Let $\hat W_0 
= 1 - \hat y^2 + i \hat y$, $y - \hat y \leq 2^{-s}$. Then $|\hat W_0|^2 \leq |W|^2 + 4 y 2^{-s} < 1$, for 
$s \geq 11 + \nu + \log_2 M$.
This value of $s$ follows by solving 
\[4 y2^{-s} \leq y^2 / 2,\]
which ensures that $\hat W_0^2 \leq 1$. In addition 
\[\left|\Re \left(\hat W_0 - W\right)\right| \leq 2 y 2^{-s} + 2^{-2s}\]
and 
\[\left| \Im \left(\hat W_0 - W\right) \right|\leq 2^{-s}.\]

Define the sequence of approximations
\begin{eqnarray}
\label{eqn:Rep_sqr}
\hat W_1 &=& \hat W_0 + e_1 \nonumber \\
\hat W_2 &=& \hat W_1^2 + e_2 \nonumber \\
& \vdots & \nonumber \\
\hat W_r &=& \left( \hat W_{r/2}\right)^2 + e_r, 
\end{eqnarray}
where $r = 2^{\nu+7}$ and the error terms $e_1, e_2, \ldots , e_r$ are complex numbers 
denoting that the real and imaginary parts of the results are truncated 
to $s$ bits of accuracy.

Observe that if $|\hat W_{2^{j-1}}| < 1$ then 
$|\hat W_{2^{j}}| < 1$, since $|\hat W_{2^{j-1}}|^2 < 1$
and truncation of real and imaginary parts does not increase the magnitude of a complex number.
Since $|\hat W_0| < 1$, all the numbers in the sequence (\ref{eqn:Rep_sqr}) belong to the unit
disk $S$ in the complex plane. 

Let $z = a + b i$. Then the function that computes $z^2$ can be understood as 
a vector valued function of $2$ variables, $h : S \rightarrow S$, such that
$h(a,b) = (a^2 - b^2, 2ab)$. The Jacobian of $h$ is 
\[J = 2 \left(\begin{array}{cc}
a & -b \\
b & a 
\end{array}\right) \quad (a,b) \in S\]
and its Euclidean norm satisfies $\|J\| \leq 2$, since $a^2 + b^2 \leq 1$.
Using this bound we obtain 
\begin{eqnarray}
\label{eq:final_error1}
|W^r - \hat W_r | &\leq & |W^r - (\hat W_{r/2})^2| + |e_r| \nonumber \\
&\leq & 2 \{ 2 |W^{r/4} - \hat W_{r/4}| + |e_{r/4}|\} + |e_r| \nonumber \\
&\leq & 2^{\nu +7} |W - \hat W_1| + 2^{\nu +7-1} |e_2| + \ldots + 2^0 |e_{2^{\nu+7}}| \nonumber \\
&=& 2^{\nu +7} \left| W-\hat W_0 \right| + 2^{\nu +7} |e_1| + \ldots + |e_{2^{\nu +7}}| \nonumber \\
&\leq & 2^{\nu +7} \left| W-\hat W_0 \right| + \frac{\sqrt 2}{2^s} \sum_{j=0}^{\nu +7} 2^{\nu + 7 -j} 
\nonumber \\
&\leq & 2^{\nu +7} \sqrt{\left( 2 y \frac{1}{2^s} + \frac{1}{2^{2s}}\right)^2 +\frac{1}{2^{2s}}} + \frac{\sqrt 2}{2^s} \left(2^{\nu +8} -1\right) \nonumber \\
&\leq & 4 \frac{2^{\nu +7}}{2^s},
\end{eqnarray}
where the last inequality follows since $2 y + 2^{-s} \leq 2^{-6} + 2^{-11}$.
\end{proof}

\begin{prop}
\label{prop:ell}
Under the assumptions of Proposition \ref{Prop: Sin-error}
we approximate $\sin x$ by $\Im (\hat W_r)$, $x \in [\pi /(2M), \pi/2 )$, 
with $s = \max\{2\nu +9, 11 + \nu + \log_2 M\}$ bits and $r = 2^{\nu + 7}$. Then
the error is
\[|\sin x - \Im (\hat W_r)|\leq 2^{-(\nu -1)}.\]
Moreover:
\begin{itemize}
\item Denoting by $\hat W_{r,j}$ the approximations to
$\sin (\pi j/(2M))$, $j = 1,2,\ldots ,M-1$, we have the
following error bound 
\[ \left| 4 M^2 \sin^2 (j \pi /(2M)) - 4 M^2 \left(\Im (\hat W_{r,j})\right)^2 \right| 
\leq 2^{-(\nu -4)} M^2,\] 
$j = 1,2,\ldots ,M-1$, for the eigenvalues of the matrix $h^{-2} L_h$ that
approximates the second derivative operator, using mesh size $h = 1/M$.
\item Letting $\ell_j$ be the truncation of $4 M^2 \left(\Im (\hat W_{r,j})\right)^2$ to
$\nu$ bits after the decimal point (the length of $\ell_j$ is $\nu + \log_2 (4M^2)$ bits,
and $\nu$ is sufficiently large to satisfy the accuracy requirements) we have 
\[\left| 4 M^2 \sin^2 (j \pi /(2M)) - \ell_j\right| \leq 17\cdot 2^{-\nu}M^2,\]
for $j = 1,2,\ldots ,M-1$.
\end{itemize}
\end{prop}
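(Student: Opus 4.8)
The plan is to chain the approximation bounds already proved in Lemma~\ref{lem:cos-sin} and Proposition~\ref{Prop: Sin-error} with the triangle inequality, and then propagate the result through the squaring operation and the final truncation to $\nu$ fractional bits.

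\emph{Step 1 (the sine bound).} I would split
\[
\sin x - \Im(\hat W_r) = \bigl(\sin x - \Im(W^r)\bigr) + \bigl(\Im(W^r) - \Im(\hat W_r)\bigr).
\]
Lemma~\ref{lem:cos-sin} bounds the first term by $2^7/r$, and since $\Im(\cdot)$ is $1$-Lipschitz, Proposition~\ref{Prop: Sin-error} bounds the second term by $|W^r - \hat W_r| \le 2^{\nu+9}/2^s$; note the hypothesis $s \ge 11 + \nu + \log_2 M$ of that proposition is implied by our choice of $s$. With $r = 2^{\nu+7}$ the first term equals $2^{-\nu}$, and the condition $s \ge 2\nu + 9$ forces the second term to be at most $2^{-\nu}$, so the sum is $2^{-(\nu-1)}$. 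This is precisely why $s = \max\{2\nu+9,\ 11+\nu+\log_2 M\}$.

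\emph{Step 2 (the squared-eigenvalue bound).} Here I would factor, for $x = j\pi/(2M)$ with $j = 1,\dots,M-1$ (so $x \in [\pi/(2M),\pi/2)$ as required above),
\[
4M^2\,\bigl|\sin^2 x - (\Im \hat W_{r,j})^2\bigr| = 4M^2\,\bigl|\sin x - \Im \hat W_{r,j}\bigr|\,\bigl|\sin x + \Im \hat W_{r,j}\bigr|.
\]
The first factor is $\le 2^{-(\nu-1)}$ by Step 1. For the second, $|\sin x| \le 1$ and, from the proof of Proposition~\ref{Prop: Sin-error}, every iterate of the (truncated) repeated-squaring sequence lies in the closed unit disk, so $|\Im \hat W_{r,j}| \le |\hat W_{r,j}| \le 1$; hence $|\sin x + \Im \hat W_{r,j}| \le 2$. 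Multiplying gives $4M^2 \cdot 2^{-(\nu-1)} \cdot 2 = 2^{-(\nu-4)}M^2$ for each such $j$.

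\emph{Step 3 (the truncation bound).} Since $\ell_j$ is $4M^2(\Im \hat W_{r,j})^2$ with all but the leading $\nu$ fractional bits discarded, $|4M^2(\Im \hat W_{r,j})^2 - \ell_j| \le 2^{-\nu}$. Combining this with Step 2 via the triangle inequality yields
\[
\bigl|4M^2\sin^2(j\pi/(2M)) - \ell_j\bigr| \le 2^{-(\nu-4)}M^2 + 2^{-\nu} = 16\cdot 2^{-\nu}M^2 + 2^{-\nu} \le 17\cdot 2^{-\nu}M^2,
\]
using $M \ge 2$ in the final step. The argument is entirely routine; the only points needing care are the two separate lower bounds on $s$ (one from Proposition~\ref{Prop: Sin-error}, one to control the factor $2$ amplification caused by squaring), and invoking the unit-disk invariance from the proof of Proposition~\ref{Prop: Sin-error} to conclude $|\Im \hat W_{r,j}| \le 1$. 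I do not expect a genuine obstacle.
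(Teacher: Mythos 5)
Your proof is correct and follows essentially the same route as the paper: the first part is the same triangle-inequality chaining of Lemma~\ref{lem1:Sin_error} (via Lemma~\ref{lem:cos-sin}) with Proposition~\ref{Prop: Sin-error}, with the choice $s=\max\{2\nu+9,\,11+\nu+\log_2 M\}$ making each term at most $2^{-\nu}$. The paper dismisses the second and third parts as immediate; your difference-of-squares factorization, the unit-disk bound $|\Im\hat W_{r,j}|\le 1$, and the $2^{-\nu}$ truncation estimate are exactly the details being left implicit, and they check out.
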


\begin{proof} We have 
\begin{eqnarray}
\left| e^{ix} - \hat W_r \right| &\leq & \left|e^{ix} - W^r\right| + 
\left| W^r - \hat W_r \right| \nonumber \\ 
&\leq & \frac{2^7}{2^{\nu +7}} + \frac{2^{\nu+9}}{2^s} \nonumber \\ 
&=& 2^{-\nu} + \frac{2^{\nu +9}}{2^s} = \frac{1}{2^{\nu -1}},
\end{eqnarray}
for $s = \max\{2\nu  +9, 11 + \nu + \log_2 M\}$, which completes the proof of the first part.
The proof of the second and third part follows immediately. 
\end{proof}

\begin{theorem}
\label{th:d-error-eig}
Consider the eigenvalues 
\[\la_{j_1, \ldots ,j_d}  = 4M^2 \prod_{k=1}^d \sin^2 \left(\frac{j_k \pi}{2M}\right),\]
$j_k = 1,2,\ldots ,M-1$, $k=1,2,\ldots ,d$ of $-\Delta_h$, $h=1/M$. Let 
\[\hat \la_{j_1 , \ldots , j_d} = \sum_{k=1}^d \ell_{j_k},\]
where $\ell_{j_k}$ are defined in Proposition
\ref{prop:ell}, $j_k = 1,2,\ldots ,M-1$, $k=1,2,\ldots ,d$.
Then 
\[|\la_{j_1 , \ldots j_d} - \hat \la_{j_1, \ldots , j_d}| \leq \frac{17M^2 d}{2^{\nu}}.\]
\end{theorem}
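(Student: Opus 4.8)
The plan is to reduce the $d$-dimensional error bound to $d$ independent copies of the one-dimensional bound already proved in Proposition~\ref{prop:ell}. The structural fact that makes this work, established in Section~\ref{sec:Analysis} and used again in the multidimensional simulation, is that the eigenvalues of $-\Delta_h$ decompose \emph{additively}: each eigenvalue is a sum $\sum_{k=1}^d 4M^2\sin^2(j_k\pi/(2M))$ of one-dimensional eigenvalues, and the circuit forms the approximant $\hat\la_{j_1,\ldots,j_d}$ as the corresponding sum $\sum_{k=1}^d \ell_{j_k}$ of the one-dimensional approximants $\ell_{j_k}$ produced by steps $1$--$5$ of the eigenvalue simulation algorithm (this is exactly the role of the tensor-product form (\ref{eq:d-dim-tensor}) together with the $\lceil\log_2 d\rceil$ extra qubits appended to register $C$). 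Hence the total error can be controlled coordinate by coordinate.

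First I would invoke Proposition~\ref{prop:ell}, which gives, for each $k=1,\ldots,d$ and each $j_k=1,\ldots,M-1$,
\[\left|4M^2\sin^2\!\left(\frac{j_k\pi}{2M}\right)-\ell_{j_k}\right|\le 17\cdot 2^{-\nu}M^2,\]
for $\nu$ large enough to meet the stated accuracy requirements. This is the only nontrivial input. Then I would simply apply the triangle inequality to the difference of the two sums:
\[\left|\la_{j_1,\ldots,j_d}-\hat\la_{j_1,\ldots,j_d}\right|=\left|\sum_{k=1}^d\left(4M^2\sin^2\!\left(\frac{j_k\pi}{2M}\right)-\ell_{j_k}\right)\right|\le\sum_{k=1}^d\left|4M^2\sin^2\!\left(\frac{j_k\pi}{2M}\right)-\ell_{j_k}\right|\le d\cdot 17\cdot 2^{-\nu}M^2,\]
which is precisely $17M^2d/2^\nu$, completing the argument.

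The proof is essentially immediate, so there is no real obstacle; the one point deserving care is making sure the index-by-index decomposition of $\hat\la$ genuinely matches that of $\la$, so that the per-coordinate estimate of Proposition~\ref{prop:ell} applies to each summand independently. There is no cancellation to exploit and no sharper bound available: the linear-in-$d$ growth of the error is real, and it is exactly what forces the choice $\nu=\lceil\log_2(17E/\varepsilon)\rceil$ used in the error analysis and in the cost estimates of Section~\ref{sec:HS}. (I note in passing that the product sign in the displayed formula for $\la_{j_1,\ldots,j_d}$ should be read as a sum, consistently with Section~\ref{sec:Analysis}; the argument above uses that additive form.)
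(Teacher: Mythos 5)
Your proof is correct and is exactly the argument the paper intends: the paper's one-line proof says the result "follows from Proposition \ref{prop:ell} and the fact that the $d$ dimensional eigenvalues are sums of the one dimensional eigenvalues," which is precisely your triangle-inequality summation of the per-coordinate bound. You are also right that the product sign in the theorem's displayed formula for $\la_{j_1,\ldots,j_d}$ is a typo for a sum, consistent with Section \ref{sec:Analysis}.
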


The proof follows from Proposition \ref{prop:ell} and the fact that the $d$ dimensional eigenvalues 
are sums of the one dimensional eigenvalues.
$\quad$\\
\bibliographystyle{unsrt}
\bibliography{paper}

\end{document}